\documentclass[letterpaper,twocolumn,10pt]{article}
\usepackage{usenix2019_v3}

\usepackage{tikz}
\usepackage{xspace}
\usepackage{soul}
\usepackage{pifont}
\usepackage{array}
\usepackage{tcolorbox}
\usepackage{enumerate}
\usepackage[british]{babel}
\usepackage{soul}
\usepackage{subcaption}
\usepackage{siunitx}
\usepackage{amsfonts}
\usepackage{amsmath}
\usepackage{amssymb}
\usepackage{amsthm}
\newtheorem{definition}{Definition}

\usepackage{etoolbox}
\AtBeginEnvironment{tcolorbox}{\small}

\usepackage{listings}
\lstset{
  basicstyle=\ttfamily,
  breaklines=true,
  escapeinside={(*@}{@*)}
}
\lstset{breaklines=true, breakatwhitespace=true, columns=flexible}

\usepackage{booktabs}
  \setlength\heavyrulewidth{0.20ex}
  \setlength\cmidrulewidth{0.10ex}
  \setlength\lightrulewidth{0.10ex}

\usepackage{makecell}

\usepackage{hyperref}

\usepackage{multirow}
\newcommand{\stdmode}[1]{\small\textcolor{black!60}{#1}} 

\newif\ifspace

\def\Snospace~{\S{}}

\theoremstyle{definition}
\newtheorem{theorem}{Theorem}

\newtheorem{claim}{Claim}

\newtheorem{fact}{Fact}

\newcommand{\sys}{Chora\xspace}

\newcommand{\define}[1]{\textit{#1}\xspace}

\newcommand{\var}[1]{\textit{#1}\xspace}
\newcommand{\msg}[1]{{\small \textsc{#1}}\xspace}
\newcommand{\field}[1]{\textit{#1}\xspace}
\newcommand{\cmd}[1]{{\small \textsc{#1}}\xspace}
\newcommand{\tuple}[1]{$\langle #1 \rangle$}
\newcommand{\code}[1]{\texttt{#1}\xspace}


\newcommand{\editorial}[2]{{\textcolor{#1}{(#2)\xspace}}}
\newcommand{\yiliang}[1]{\editorial{orange}{yiliang: {#1}}}
\newcommand{\nitin}[1]{\editorial{teal}{nitin: {#1}}}

\newcommand{\lijl}[1]{\editorial{cyan}{lijl: {#1}}}

\begin{document}

\title{Building State Machine Replication Using Practical Network Synchrony}

\author{\textnormal{
Yiliang Wan$^{1}$\quad Nitin Shivaraman$^{2}$\quad Akshaye Shenoi$^{3}$\thanks{Author conducted research while at National University of Singapore.}\quad Xiang Liu$^{1}$\quad Tao Luo$^{2}$\quad Jialin Li$^{1}$} \\
\\
$^{1}$National University of Singapore\\
$^{2}$Agency for Science, Technology and Research (A*STAR)\\
$^{3}$ETH Zurich
}

\maketitle

\begin{abstract}

Distributed systems, such as state machine replication, are critical infrastructures for modern applications.
Practical distributed protocols make minimum assumptions about the underlying network: They typically assume a partially synchronous or fully asynchronous network model.
In this work, we argue that modern data center systems can be designed to provide \emph{strong synchrony} properties in the common case, where servers move in synchronous lock-step rounds.
We prove this hypothesis by engineering a practical design that uses a combination of kernel-bypass network, multithreaded architecture, and loosened round length, achieving a tight round bound under 2$\mu$s.
Leveraging our engineered networks with strong synchrony, we co-design a new replication protocol, \sys.
\sys exploits the network synchrony property to efficiently pipeline multiple replication instances, while allowing all replicas to propose in parallel without extra coordination.
Experiments show that \sys achieves 255\% and 109\% improvement in throughput over state-of-the-art single-leader and multi-leader protocols, respectively.

\end{abstract}

\section{Introduction}
\label{sec:introduction}

It is well-established that network synchrony assumptions fundamentally impact distributed systems designs.
While a synchronous network simplifies protocol designs, practical distributed systems only assume a weaker partially synchronous~\cite{raft2014,paxosms2001,vr2012,pbft,hotstuff2019} or fully asynchronous network model~\cite{honeybadger,narwhal-tusk,quepaxa2023}.
The rationale is completely justified --- strong synchrony assumptions are impossible to guarantee in any realistic deployment.

But can we provide stronger synchrony properties \emph{in the common case}?
Traditionally, partially synchronous networks assume the network exhibits periods of synchrony;
this synchrony implies the existence of a bound on message delivery and processing latency.
The bound could be loose, as long as it is finite.
We are, however, interested in a stronger form of synchrony, where message delivery and processing across \emph{all} servers exhibit \emph{low variance} in latency.
Such synchronous property allows us to divide physical time into ``rounds'';
in each round, each server sends and receives messages, makes transitions in its state machine, and \emph{synchronously} moves to the next round.
With low variance in message latencies, we can set a \emph{tight} bound on the round duration such that all servers complete each round with high probability while having high resource utilization.
This ``lock-step'' style of distributed computation is highly reminiscent of traditional synchronous distributed protocols, albeit with probabilistic guarantees.

In this work, we demonstrate that practical networks can be designed and engineered to provide such a level of synchrony in the common case.
Time synchronization protocols, such as NTP~\cite{ntp} and PTP~\cite{ptp}, are commonly deployed to provide accurate \emph{clock synchronization};
distributed systems can leverage kernel-bypassed stacks~\cite{dpdk,ix,demikernel} and virtualized hardware device queues~\cite{arrakis,sriov} to achieve \emph{low and predictable} I/O processing latency.
A multithreaded software architecture can isolate the protocol critical path into a streamlined core with highly deterministic performance.
A blend of these solutions enable a cross-stack system design that offers our desired synchrony properties among distributed servers.
Critically, our synchrony model does not require equal link delays, a much harder task for practical networks;
instead, we only demand \emph{low variance} in message delivery and protocol processing speed on each server.
Our cross-stack design enables synchronized rounds among five cluster servers with a \emph{tight} latency bound under 2$\mu$s.




Why does our stronger form of synchrony matter for distributed protocols?
Traditionally, protocols only leverage network synchrony to ensure liveness properties~\cite{flp}.
In this work, we take the more extreme position: Stronger synchrony can also improve distributed protocol performance.
While prior work has demonstrated that synchronized clocks can improve read-only operation performance in distributed transactions~\cite{spanner2013} and replication~\cite{chubby2006}, we show that our synchrony model can accelerate \emph{arbitrary workloads} of a distributed protocol.

We take state machine replication~\cite{smr} as a concrete protocol instance.
Our synchronized replica rounds permit \emph{all} replicas to propose in each round, while naturally remove the extra coordination overhead resulted from such multi-leader design, a major drawback in prior leaderless protocols~\cite{mencius2008, epaxos}.
Lock-step processing also enables pipelined replication instances in a coordinated, streamlined fashion:
Each protocol message serves concurrently as a new proposal and an acknowledgement to proposals in prior rounds.
The resulting protocol could achieve $O(1)$ \emph{amortized} message complexity without compromising latency.


The above insights motivate us to propose a new replication protocol, \sys.
\sys is a state machine replication protocol \emph{co-designed} with a network layer that is engineered to provide our stronger synchrony properties.
\sys replicas proceed in synchronous lock-step rounds in normal operation.
Each replica can propose commands, acknowledge prior proposals, and commit commands using a single broadcast message within a single round.
Replicas also process events across multiple log slots in well-coordinated order, minimizing idle resources on any replica.
\sys thus enables a fully pipelined protocol with tightly ``clocked'' stages, committing up to $N$ proposals in each round while not compromising end-to-end latency, where $N$ is the replica number.

We evaluated the performance of \sys on a testbed with up to five replica servers.
\sys achieves 255\% and 109\% higher throughput than Multi-Paxos~\cite{paxosms2001} and Mencius~\cite{mencius2008} in time-slotted mode and over 130\% and 35\% higher throughput in responsive mode, all with virtually zero impact on latency.
We also demonstrate the impact of selecting optimal round length on the throughput and communication efficiency.

\section{Background and Related Work}
\label{sec:background}

In this paper, we consider the problem of \emph{state machine replication} (SMR)~\cite{smr}, in which a set of replica servers applies a series of deterministic operations to a state machine.
A correct SMR protocol guarantees \emph{linearizability}~\cite{linearizability1990}, even when a subset of replicas fails.
In this work, we assume all replicas follow the protocol, and can only fail by crashing.
SMR protocols have been widely deployed to provide strong fault tolerance guarantees in distributed systems~\cite{chubby2006,spanner2013,zookeeper2010,gaios}.

\paragraph{SMR protocols}
Most commonly deployed SMR protocols are leader-based~\cite{paxosms2001,pmmc2015,vr2012,raft2014}, where one replica is elected as a designated leader.
All clients forward their requests first to the leader replica.
The leader is responsible for ordering the operations, and replicating its ordered operation sequence to other replicas.
To guarantee protocol safety, the leader collects quorum acknowledgements before committing and replying to clients.

Leader-based SMR protocols have two main weaknesses.
First, the leader replica limits the overall protocol throughput.
Second, in wide area deployments, clients which locate far from the leader suffer longer replication latency.
Multi-leader replication protocols~\cite{mencius2008,epaxos,epaxos2021,smr-scalability} address the issue by allowing concurrent request proposers.
Specifically, Mencius~\cite{mencius2008} partitions the operation log space across proposers, EPaxos~\cite{epaxos,epaxos2021} leverages operation dependency graph to detect ordering conflicts and resolve them in a slow path, while Insanely Scalable SMR~\cite{smr-scalability} provides a generic construction to turn leader-driven protocols into multi-leader ones for better scalability.

\paragraph{Network models}
Prior SMR protocols commonly assume a partially synchronous~\cite{partialsync1988} network model.
The model defines a finite but unknown global stabilization time (GST).
After the GST, the network exhibits \emph{synchrony}, in which there exists a \emph{known bound} in the message delivery latency and processing speed of each node.
The network is asynchronous before the GST.
It has been proven that consensus using deterministic protocols is only possible during period of synchrony~\cite{flp}.

Synchrony assumptions are challenging to guarantee in practice.
Factors such as network congestion, device failures, cache misses, and scheduling can all influence the message delivery latency and process performance bound.
Practical distributed systems, therefore, only make minimum synchrony assumptions~\cite{paxosms2001,vr2012,raft2014,pbft} to guarantee protocol \emph{liveness}.

A recent line of research exploited network topology, Software Defined Networks (SDN), and programmable switches in datacenter networks to provide stronger network models.
Speculative Paxos~\cite{specpaxos2015} engineers a \emph{mostly-ordered} multicast primitive to provide best-effort message ordering properties.
NOPaxos~\cite{nopaxos2016}, Eris~\cite{eris}, and Hydra~\cite{hydra} further \emph{guarantees} multicast ordering by relying on programmable switches as network sequencers.
These network primitives reduce, or even eliminate, coordination overhead in distributed protocols such as replication and distributed transactions.

\paragraph{Synchronized clocks in distributed systems}
Many practical distributed systems leverage loosely synchronized clocks to improve performance.
The most common technique is \emph{leases}.
With an acquired lease, a leader replica can assume the absence of other leaders until the lease expires.
Leases allow a leader to serve read requests without replicating the request~\cite{chubby2006}, or to simplify the leader election protocol~\cite{gfs}.
Spanner~\cite{spanner2013} implements a more aggressive TrueTime API that bounds clock uncertainties.
It leverages this bounded clock skews to serve read-only transactions with reduced coordination.
Protocols such as Clock-RSM~\cite{clockrsm2014} and EPaxos Revisited~\cite{epaxos2021} also apply clock synchronization to reduce conflicts in a multi-leader replication protocol.

\section{The Case for Strong Synchrony}
\label{sec:case}

It is well-established that distributed consensus is impossible in a fully asynchronous network~\cite{flp}.
Making partial synchrony assumption~\cite{partialsync1988}, with an unknown but finite bound on the period of synchrony, is a common approach adopted by many practical consensus and replication protocols~\cite{paxosms2001,pmmc2015,vr2012,raft2014}.
They leverage synchrony to guarantee \emph{liveness}, i.e., a proposal eventually reaches agreement.

Does synchrony provide benefits to state machine replication beyond liveness guarantees?
In this work, we show that a stronger form of synchrony can improve replication throughput without compromising latency.
The key observation is that when replicas are highly synchronous, they can perform message transmission and protocol processing in coordinated \emph{lock steps}.
This enables efficient pipelining, egalitarian replica roles, and message aggregation, all with no additional coordination and minimal artificial delays.

Throughput of a replication protocol is primarily determined by its bottleneck message complexity, i.e., the highest number of protocol messages a replica processes to commit a proposal, while latency is determined by the end-to-end message delay.
For leader-based replication protocols~\cite{pmmc2015,vr2012,raft2014}, the leader replica processes $O(N)$ messages per proposal commitment, where $N$ is the replica count, dictating the overall protocol throughput.

Multi-leader (or equivalently, leaderless) SMR protocols~\cite{mencius2008,epaxos,epaxos2021,smr-scalability} have been proposed to address such performance bottleneck.
Despite their theoretical throughput benefits, prior multi-leader protocols face a major challenge --- coordination among the leaders.
When multiple proposers content on the same slot in the linearizable sequence, coordination is required to reach consensus on the slot.
In fact, leader-based solutions elect a distinguished proposer to avoid this exact issue.
Mencius~\cite{mencius2008} statically partitions the log space to eliminate leader contention.
However, the lack of coordination results in faster leaders being blocked on decisions of slower leaders.
EPaxos~\cite{epaxos} leverages commutativity between operations to avoid leader coordination in the common case.
Performance of the protocol is sensitive to the workload, since requests with dependency will lead to additional coordination among the replicas.



\paragraph{Stronger synchrony for higher SMR performance.}
A critical issue in a multi-leader protocol is the uncoordinated actions across the leaders.
The above performance problems that plague prior protocols are a direct consequence of such incoordination.
Unfortunately, the issue is \emph{fundamental} in the current partially synchronous network model.
Even during periods of synchrony, servers only have a loose latency bound on message delivery and processing.
They lack precise timing information of message delivery for both inbound and outbound messages.
Consequently, the protocols are designed to handle uncoordinated timing across the servers in the common case.

Suppose we strengthen the network model to provide the following timing guarantee:
The variance of the message delivery and processing latencies is low, and the latencies and the variance are \emph{known} to all servers.
With this known timing information, servers can divide physical time into logical \emph{rounds}, such that each server can send a message to the other nodes, deliver inbound messages, and process those messages all within the round with high probability.
Such round structure enables multiple leaders to proceed in \emph{lock steps}, reminiscent of theoretical work in fully synchronous protocols.
With precise timing guarantees of inbound and outbound message deliveries, leaders can totally order their proposals without any additional coordination.
And when the variance is low, the synchronous round structure imposes minimal loss in resource efficiency on each server.

The synchronous round structure also enables efficient protocol pipelining and message aggregation.
In each round, a leader can expect to deliver one proposal from every other leaders and broadcast one message.
The node can aggregate all proposal acknowledgements and its own proposal in its broadcast message.
Effectively, the approach pipelines processing of multiple consensus instances into a single message.
Critically, such pipelining comes for free with the round structure, imposing nearly zero latency penalty.
Moreover, the approach allows client request batching without relying on heuristic batch size or timeout values.
Each leader buffers client requests before its ``transmission schedule''.
When broadcasting in a round, it simply proposes all buffered client requests.

Our synchronous rounds can reduce the \emph{amortized} message complexity of a multi-leader protocol.
By aggregating proposals and acknowledgements in a message, the protocol allows committing $N$ proposals in one round with $O(N)$ message complexity on each node, reducing the amortized message complexity per commit to $O(1)$.
Note that such amortization is not feasible in prior protocols under the traditional synchrony assumptions.



Prior work has leveraged synchronized clocks to improve protocol performance.
For instance, Chubby~\cite{chubby2006} uses loosely synchronized clocks to implement leader leases, allowing clients to safely read directly from the leader replica;
Spanner~\cite{spanner2013} implements a TrueTime API with bounded clock skew, enabling linearizable and snapshot read-only transactions without replication.
However, synchrony in prior systems only benefits read-only operations;
operations involve writes still incurs the same leader-based replication overheads.

\section{The \sys Network Model}
\label{sec:network}

So far, we have argued for a stronger form of network synchrony for distributed protocols.
In this section, we precisely define the synchrony metrics of this network model.
Using these metrics, we analyze the performance of conventional round-based protocol constructs.
We then introduce a new network round model to further improve the processing efficiency.
Next, we discuss trade-offs of the \sys round model to better understand its performance.
Lastly, we describe a new network-level API based on the new round model.
A new replication protocol that builds on this model and API is introduced in \autoref{sec:protocol}.

\subsection{Quantifying Network Synchrony}
\label{sec:network:metric}

Similar to prior models, our synchrony definition centers around message delays in a networked system.
However, we further refine the model to define a \emph{degree} of synchrony, instead of a fixed delay bound.
Suppose the message delay is represented by a random variable $d$.
We define a \emph{$x^{th}$ synchrony coefficient} as:
\begin{equation}\label{eq:coefficient_old}
    \tilde{S}^{x}  = \frac{\mu (d)}{p_{x^{th}}(d)}
\end{equation}
Where $p_{x^{th}}(Z)$ and $\mu (Z)$ denote the $x^{th}$ percentile and the expected value of random variable $Z$, respectively.

This coefficient quantifies the synchrony degree.
$p_{x^{th}}(d) = \frac{1}{\tilde{S}^{x}} \mu (d)$ specifies a time bound for receivers to complete processing a message after transmission, with an expectation of $x\%$.
$\frac{1}{\tilde{S}^x} - 1$ represents the required relative time buffer to tolerate the tail delays.
If the network is perfectly synchronous, $\mu (d) = p_{x^{th}}(d)$, which results in $\tilde{S}^x = 1$.

Nodes in prior fully synchronous protocols move in lock-step rounds.
This enables synchronous and coordinated behavior across the network.
In such a round, each node multicasts a message in the beginning, and then receives and processes the messages from other nodes.
We define $\Delta \tilde{T}^{x}$ to be the time bound that a node can finish all processing for a round with an expectation of $x\%$.
In the normal case, a node can receive and process more than one message in each round.
Therefore, $\Delta p_{x^{th}}(d)$ is the lower bound of $\Delta \tilde{T}_{x}$.
Assuming that the system can complete a workload of $\Delta W_{x}$ on average in each round with such an expectation, then the system throughput upper bound can be described as:
\begin{equation}\label{eq:tput_old}
   \textit{Tput}^{x} = \frac{\Delta W^{x}}{\Delta \tilde{T}^{x}} \leq \tilde{S}^{x} \frac{\Delta W^{x}}{\mu (d)}
\end{equation}
From the equation, we can see that a more synchronous system (with a larger $S_x$) provides better performance.

\subsection{The \sys Round Model}
\label{sec:network:round}

Let's look at $\Delta \tilde{T}_{x}$ from another perspective.
We further decompose the message delay $d$ into a network induced propagation delay $d_{prop}$ and a delay introduced by the application as a processing delay $d_{proc}$.
Note that $d_{prop}$ includes queuing delay, transmission delay, processing delay in the network stack, and propagation delay in the physical medium.
With $d_{prop}$ and $d_{proc}$, we have:

\begin{equation}\label{eq:round_length_old}
\begin{split}
    \Delta \tilde{T}_{x} \geq p_{x^{th}}(d) \approx {} & p_{x^{th}}(d_{prop}) + p_{x^{th}}(d_{proc}) \\
                 = {} & \mu (d_{prop}) + \left[p_{x^{th}}(d_{prop}) - \mu (d_{prop})\right] + \\
                      & \mu (d_{proc}) + \left[p_{x^{th}}(d_{proc}) - \mu (d_{proc})\right]
\end{split}
\end{equation}

The equation indicates that $\Delta \tilde{T}_x$ is bounded by both the expected and the tail of both propagation delay and processing delay.
With this observation, we introduce \emph{\sys rounds} to decrease its time bound ($\Delta T_x$) for better performance.

Similar to a conventional synchronous round, in a \sys round, a node also first multicasts a message and then receives and processes messages from other nodes.
However, a \sys round doesn't require nodes to process messages from the same round.
This relaxation removes the expectation of propagation delay from the lower bound of $\Delta T_x$.
In other words, for a \sys round, we have:
\begin{equation}\label{eq:rount_length}
\begin{split}
    \Delta T_{x} \gtrsim & \left[p_{x^{th}}(d_{prop}) - \mu (d_{prop})\right] + \\
                         & \mu (d_{proc}) + \left[p_{x^{th}}(d_{proc}) - \mu (d_{proc})\right]
\end{split}
\end{equation}

\autoref{fig:network} shows an example of a system operating on \sys rounds.
Each message is attached with a number that denotes the round when it is sent.
In the next round ($i+1$), node B will process node A's message from round $i$, while node C will process node A's message from round $i-3$.
Our new round structure allows the round length to be \emph{decoupled} from the longest propagation delay (node A to node C).

In practice, processing delays are often much lower than propagation delays on the critical path.
For instance, in our DPDK-based testbed, the average message processing delay is around 0.2$\mu$s, while a 2-hop propagation delay can reach 6$\mu$s.
With the \sys round definition, our system stably operates with a 2$\mu$s round length in a 5-replica configuration, achieving a more than 3$\times$ performance improvement over the conventional round structure.

\begin{figure}[t]
\centering
\includegraphics[scale=0.85]{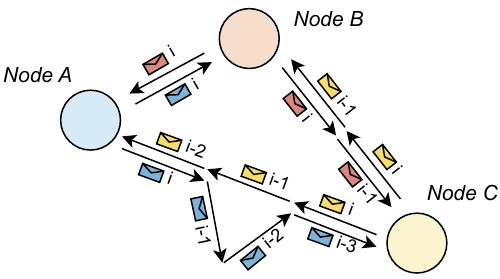}
\caption{The \sys round model.}
\label{fig:network}
\vspace{-1em}
\end{figure}

With the new round definition and \autoref{eq:rount_length}, we rectify \autoref{eq:coefficient_old} as:
\begin{equation}\label{eq:coefficient}
    S^x = \frac{\mu(d_{proc})}{\mu(d_{prop}) - P_{x^{th}}(d_{prop}) + P_{x^{th}}(d_{proc})}
\end{equation}

Similar to \autoref{eq:tput_old}, we have the following formula that describes the relationship between $S^x$ and the system's performance:
\begin{equation}\label{eq:tput}
   \textit{Tput}^{x} = \frac{\Delta W^{x}}{\Delta T^{x}} \leq S^{x} \frac{\Delta W^{x}}{\mu (d_{proc})}
\end{equation}
The equation shows that a more synchronous system, which has a higher $S^x$, is able to provide better performance.
For a perfectly synchronous system where tail delays equal average delays, we have $P_{x^{th}}(d_{prop}) = \mu(d_{prop})$ and $P_{x^{th}}(d_{proc}) = \mu(d_{proc})$.
In this case, the throughput upper bound is $\frac{\Delta W^x}{\mu(d_{proc})}$.
This upper bound denotes a maximum processing utility rate of 100\%.

\subsection{Understanding the Performance of \sys Rounds}
\label{sec:network:analysis}

\begin{figure}[t]
\begin{center}
\vspace{-.5em}
\includegraphics[scale=1]{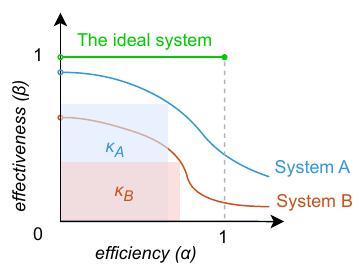}
\end{center}
\vspace{-2em}
\caption{
Synchrony efficiency-effectiveness graph.
}
\label{fig:efficiency_effectiveness}
\end{figure}

As defined in \autoref{eq:tput}, $x$ denotes the expectation that a node can complete all processing in a \sys round.
For a specific system, $S^x = f(x)$ is fixed, and the function $f(x)$ describes the synchrony property of the system.
The system can be configured to operate with different lengths of rounds, which results in different expectations of a node completing all processing in a round.
As the expectation ($x$) varies, there is a trade-off between $\Delta W^x$ and $\Delta T^x$ in \autoref{eq:tput}.
This subsection discusses this trade-off and its impact on the system's performance.

Consider a family of systems operating on \sys rounds with the same normal-case operation on the same hardware $\mathcal{H}$ and workload $\mathcal{W}$.
Suppose there is an ideal system where $S^{x=100} = 1$, it completes the workload in $\hat{r}$ rounds, with an average total processing delay of each round to be $\hat{t}$.
For a practical system $A$ where $x\% < 1$ and thus occasionally requires a slow path protocol to handle round violations.
Suppose this system takes $r > \hat{r}$ rounds to complete $\mathcal{W}$, with a round length of $t$.
We define the following metrics:
\begin{itemize}
    \item \textbf{Round efficiency:} $\alpha = \hat{t} / t$ (correlated to $\Delta T^x$)
    \item \textbf{Round effectiveness:} $\beta = \hat{r} / r$ (correlated to $\Delta W^x$)
\end{itemize}

\autoref{fig:efficiency_effectiveness} shows an \emph{efficiency-effectiveness graph}, which captures the trade-off between the two metrics.
A larger $\alpha$ corresponds to a smaller and more aggressive round length, reducing the cost for each round (more efficient per round).
However, this also leads to more frequent slow path fallbacks when a node cannot finish all required processing in a round.
As a result, the system needs more rounds to complete $\mathcal{W}$ than expected (less effective per round).

Each point on an efficiency-effectiveness curve represents a configuration of the corresponding system.
Point $(1,1)$ corresponds to the configuration that the ideal system provides the highest throughput $\hat{\textit{Tput}}$.
The product $\alpha\beta$ denotes a system's relative throughput compared to $\hat{\textit{Tput}}$ with a given configuration.
For each system, we define $\kappa = max(\alpha\beta)$, representing its maximum possible throughput relative to the ideal system.
$\kappa$ reflects how well a system can achieve and leverage synchrony.
It is related to both the network and protocol-layer design (i.e., slow path efficiency).

An efficiency-effectiveness graph allows meaningful comparisons between systems.
For example, in \autoref{fig:efficiency_effectiveness}, at a fixed $\alpha$, $\beta_A > \beta_B$ means system $A$ tolerates the round length better than $B$.
While at a fixed $\beta$, $\alpha_A > \alpha_B$ indicates that $A$ sustains the same synchrony effectiveness with shorter rounds.
$\kappa_A > \kappa_B$ means that $A$ can yield better overall performance for $\mathcal{W}$.

Efficiency-effectiveness graphs have some other properties.
First of all, the deviation of a system's curve from 1 on the y-axis represents the network drop rate and the system's ability to handle those message drops.
Besides, as $\alpha$ keeps growing, the curve approximates an inverse proportional function $\beta = \frac{C}{\alpha}$, where the constant $C$ represents the performance when the system operates completely with its slow path.
If $C < 1$, it is implied that the normal case operation design can potentially benefit from the \sys network model for performance gains.

\subsection{Network API}
\label{sec:network:api}

\begin{figure}
\begin{tcolorbox}[title = {\sys Network API}]

\begin{itemize}
    \item \mbox{\lstinline{register(group_addr)}} - Register the node with a \sys group
    \item \mbox{\lstinline{send(addr, msg)}} - Send a message to a single destination
    \item \mbox{\lstinline{multicast(group_addr, msg)}} - Send a message to all nodes in a \sys group
    \item \lstinline{recv() -> msgs} - Receive a batch of messages sent from the previous rounds
\end{itemize}

\end{tcolorbox}
\caption{\sys network API}
\label{fig:network_api}
\vspace{-1em}
\end{figure}

Nodes in \sys are organized into groups;
the synchrony properties in \autoref{sec:network:metric} are only enforced within a \sys group.
We implement the \sys network primitive using a user-space library.
The library exposes a set of communication APIs to the application, as shown in \autoref{fig:network_api}.

A node is required to join a \sys group using \code{register()} before it can send messages to or receive messages from other registered nodes in the group.
After a node successfully joins a group, its subsequent \code{send()}, \code{multicast()}, and \code{recv()} calls follow the virtual rounds scheduled by the network primitive.
During periods of synchrony, the primitive schedules a \code{send()} or a \code{multicast()} call in the current round, only if no other \code{send()} or \code{multicast()} has been performed in the same round;
Otherwise, the call fails.
\code{recv()} is a blocking call.
When it terminates, it returns all messages destined to the calling node in previous rounds.

\section{Engineering Synchronous Rounds for Datacenter Networks}
\label{sec:system}

Is the strong network model in \autoref{sec:network} even practical?
In this section, we discuss the design and implementation of strong network synchrony in practical data center networks.

We focus on the design of the end-host network stack in this section.
Available technologies for network infrastructures such as Software-Defined Networking (SDN)~\cite{specpaxos2015} and Time-Sensitive Networking (TSN) protocol suites~\cite{ieee8021qcc, ieee8021qbv, ieee8021as2020} can be applied to provide stronger synchrony.

\subsection{Design Goal: Shorter Round Length with Smaller Tail}


In our lock-step round model, all replicas wait for each round to elapse before moving to the next round;
The round length is therefore critical to the performance of the system.
If the round is too long, replicas process messages at a low rate and are underutilized, resulting in decreased overall throughput.
A longer round also leads to higher request latency, since the commit latency is directly proportional to he round length.
However, if the round length is too short, some replicas may fail to complete their processing within the bound, violating our synchrony properties.


For optimal efficiency, the design therefore needs a \emph{tight} bound on the round length.
Critically, it requires not just low-latency processing in the average case but also in the tail case, as the bound is defined by the slowest replica.
As such, our goal of the network design is to offer \emph{low} and \emph{predictable} processing speed across all replicas.


\subsection{Kernel Bypass and Clock Synchronization}

Though the common approach of running distributed protocols atop the Linux kernel benefits from the mature kernel support such as versatile network stacks, resources load balancing, and platform compatibility, it suffers from higher performance overhead introduced by kernel-user space crossing and kernel management overhead.
Kernel involvement not only introduces higher latency~\cite{electrode2023}, but also leads to a longer tail of both processing delay and message delay due to its multiplexing nature~\cite{ix,netmap}.

We run replication protocols in kernel-bypassed I/O stacks~\cite{dpdk,arrakis,ix,demikernel} to reduce I/O processing latency and variance.
To further improve processing predictability, we enable core isolation to reduce interference from other processes.
At the same time, we take advantage of available time synchronization tools by running the standard PTP protocol.
The PTP clock is used to synchronize the local real-time clock.
We specifically leverage the vDSO~\cite{linux} optimization, which allows user-space applications to access the synchronized time without invoking the kernel.


\subsection{Isolating the Critical Path with Multi-Threading}

\begin{figure}
\begin{center}
\includegraphics[scale=0.6]{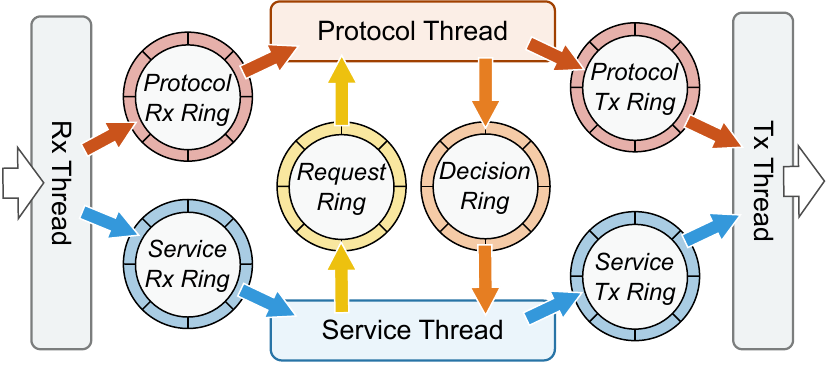}
\end{center}
\vspace{-1em}
\caption{
Multithreaded software architecture
}
\label{fig:system_arch}
\vspace{-1em}
\end{figure}

In state machine replication, the replica processing logic can be generally divided into two parts:
a service logic which is responsible for interactions with clients, and the core protocol logic that drives log replication.
The service logic includes receiving and processing client requests, maintaining client information, de-duplication, and replying to clients.

The service logic presents hard challenges to efficiently constructing synchronized rounds.
It introduces extra processing overhead, which implies a longer round length.
Even worse, such overhead is inherently dynamic and unpredictable, since client behaviors are outside the control of the replication protocol.
This further impairs the system design by introducing high variance to the overall workload.

To overcome this challenge, we propose a design to isolate the protocol logic, which is the critical path of the system, from the service logic with multi-threading.
\autoref{fig:system_arch} shows the architecture of the replica application.
The two types of logic run in their own kernel threads, and exchange information using two lockless ring buffers.
The yellow arrows show the flow of client requests.
After deduplication, the service thread puts the requests into the request ring.
These requests are fetched by the protocol thread when it is ready to propose.
The orange arrows represent the flow of decisions.
When a decision is made, the protocol thread enqueues it into the decision ring.
The service thread later pulls the decisions from the ring, executes the commands, and replies to clients.

Apart from the protocol and the service thread, two other threads are spawned to transmit (Tx thread) and to receive (Rx thread) packets to maximize the network performance.
The protocol and the service thread interact with Tx and Rx threads using two separate lockless rings.
The red arrows show the flow of protocol traffic, while the blue arrows represent service traffic.
Since the predictable performance of the protocol processing logic is more critical for the overall system performance, protocol traffic is prioritized over client traffic.
Specifically, the protocol Tx ring enjoys a higher priority than the service Tx ring for the Tx thread.
The Tx thread exhausts the protocol Tx ring first before pulling the service ring, ensuring protocol packets are transmitted immediately.

The separation of the service logic from the protocol logic isolates the critical protocol path from the unpredictable workload introduced by the clients, as well as the variable processing time caused by state machine execution.
The design allows the system to use a shorter and more tightly bound round length.
Moreover, it has the additional benefit of enabling \emph{adaptive batching}:
The service thread naturally batches requests in the request ring until the protocol thread is ready to propose;
The protocol thread then pulls all queued packets in the ring and proposes them as a single batch.

Note that the design in this section can be generally applied to any practical replication system.
In fact, we implemented and evaluated all comparison replication protocols (details in \autoref{sec:evaluation}) using the above architecture for fair comparison.

\subsection{Loosening the Round Length}

\begin{figure}
\begin{center}
\includegraphics[scale=0.40]{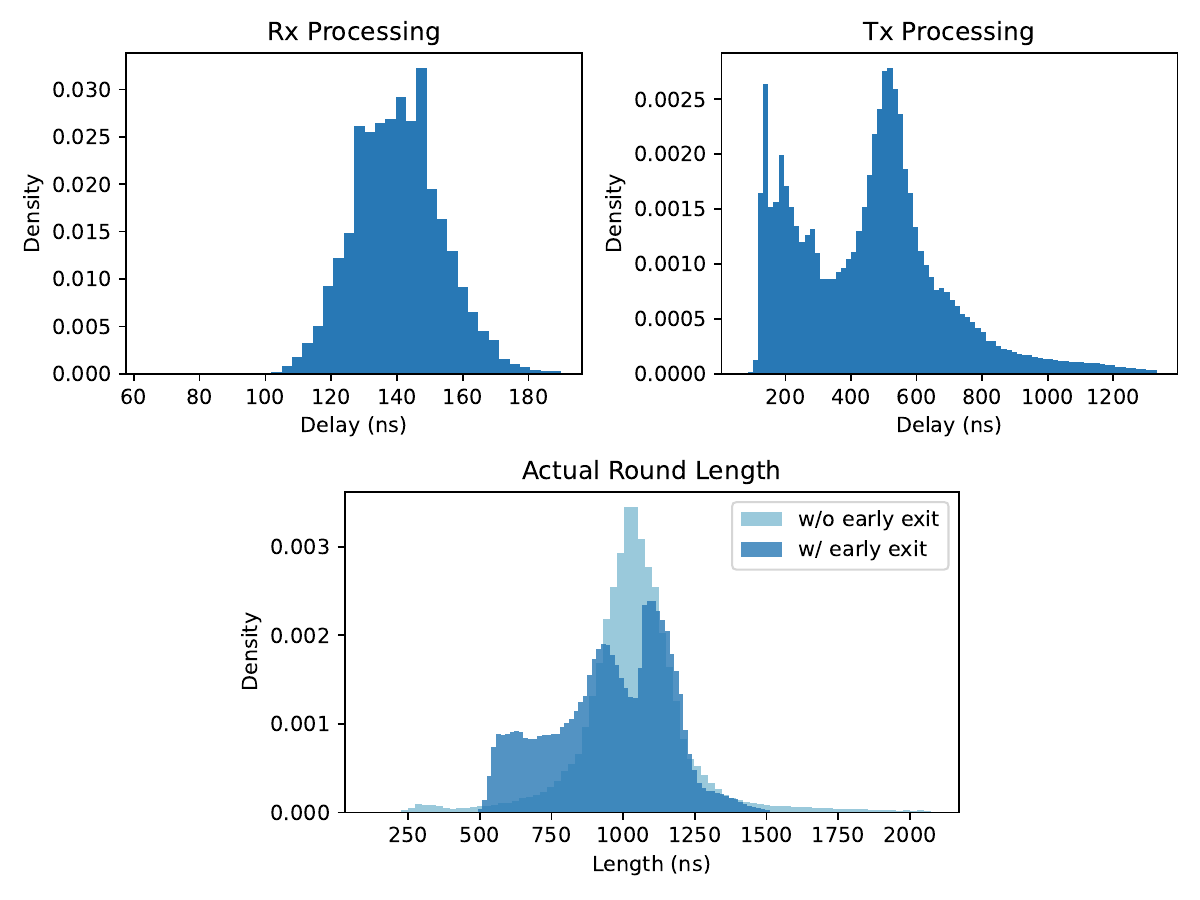}
\end{center}
\vspace{-1em}
\caption{
Processing delays and actual round length distribution with 3 replicas.
}
\label{fig:delay_distribution}
\vspace{-1em}
\end{figure}

The conceptual \sys round model uses a hard round length.
If a system strictly follows the model, a node should exit a round exactly when the configured round length has elapsed.
However, this leads to inefficiency in a practical system.
On the one hand, if the node is processing a message when the round ends, it needs to terminate it at once.
This introduces extra complexity in state bookkeeping, and is also likely to invalidate the entire processing, resulting in extra computation overhead.
On the other hand, since the round length is configured in a way to tolerate tail delays for sufficient round effectiveness, the time required for many rounds can be smaller than the configured round length.
This means that a node may need to wait even after it has finished all expected processing for the current round, underutilizing computation power.

We solve these problems by loosening the actual round length.
First, a node configures a round timeout at the beginning of each round.
It checks the current time against the timeout only after each message processing and when it is idle.
The node exits the round when the check result shows that the current time has exceeded the round timeout.
Besides, we allow the node to do early exits.
After processing messages from all peers, it exits the current round immediately, regardless of the round timeout.

\autoref{fig:delay_distribution} shows the distribution of processing delays and the actual round length, with and without early exit, using three replicas.
The system is configured with a round timeout of 1050ns.
Early exit allows the replicas to decrease the average actual round length from 1037ns to 955ns.

\section{The \sys Protocol}
\label{sec:protocol}

\subsection{Overview}

\sys is a state machine replication protocol that ensures linearizability~\cite{linearizability1990} among a group of \define{replicas}.
\sys tolerates $f$ crash failures with $N=2f+1$ replicas.
We assume all replicas follow the protocol, i.e., \sys does not handle Byzantine faults.

The protocol proceeds in \define{view}s.
In each view, each live replica is allocated a subset of the log space;
A replica can only propose \define{commands} in its assigned log slots.
Effectively, the protocol assigns each replica as a leader for a non-overlapping set of log slots in each view.

Inherently, \sys is a partial synchronous protocol and guarantees progress during a period of synchrony.
Note that the synchrony here is the common synchrony concept assumed by typical leader-based protocols such as Paxos and Raft.
For clarity, in this section, we refer to this kind of conventional synchrony concept as \define{ordinary synchrony}, while using \define{strong synchrony} to represent the case where the synchrony coefficient $S^x$ that we defined in \autoref{sec:network} is high.
While \sys only relies on ordinary synchrony to ensure liveness, it further benefits from strong synchrony for improved performance.

\sys replicas use the network API in \autoref{sec:network:api} to communicate with each other.
They join the same network group with \code{register()} during initialization.

\sys may operate in two different modes depending on the network environment.
The \define{pulsing mode} is used by \sys to exploit the performance benefit when strong synchrony exists.
In this mode, replicas proceed at the same pace in synchronous rounds.
In each synchronous round, a replica multicasts a message that includes a proposal for its next allocated slot and a cumulative acknowledgment for all previously received proposals.
Such a transmission is referred to as a \define{pulse}.
While for the rest of the round, it keeps silent and delivers proposals multicasted by other replicas in some previous rounds and stores them in the log.
A replica commits and executes a command once it receives quorum acknowledgment for all proposals up to the command.

When a replica fails to receive a proposal, it includes the slot index of the missing proposal in its next multicast.
When receiving such an index, replicas that have received the missing proposal attach the proposal in their next multicast to facilitate message recovery.

\sys falls back to a \define{responsive mode} when our synchrony property is violated.
In this mode, replicas behave similarly to other partially synchronous protocols.
In such a case, \sys is driven by new client requests that motivate a replica to multicast proposals when it is a proposer.
When receiving a proposal, replicas reply to it with acknowledgment in the normal case.
Timers are used to facilitate progress by notifying replicas to retransmit in case message drops happen.

Replicas can seamlessly switch between the two modes based on the network environment.
This switch doesn't require a reconfiguration of the protocol.
A replica can process the message from the other mode without breaking either safety or liveness.
The fundamental difference between the two modes lies only in the way that replicas are driven to do transmission.
While a replica reactively responds to client requests and messages from other replicas in the responsive mode, in the pulsing mode, it would always lazily wait until the next round for a new transmission.

\sys uses a view change protocol to handle replica failures.
The protocol is driven by any live replica.
Concurrent and conflicting view changes are resolved by random back-offs.
The view change protocol removes suspected replicas from the transmission schedule and reassigns the log space to the remaining live replicas.
Each proposal is attached with the view in which it is proposed.


\ifspace
\paragraph{Outline.}
\sys consists of 4 sub-protocols:

\begin{itemize}
    \item \define{Normal Operations in Pulsing Mode} (\autoref{sec:protocol:normal}):
    Replicas propose, acknowledge, and commit client commands in synchronous rounds.

    \item \define{Proposal Recovery in Pulsing Mode} (\autoref{sec:protocol:recover}): When proposals are dropped in the network, \sys ensures eventual delivery of proposals through a recovery protocol.

    \item \define{Responsive Mode} (\autoref{sec:protocol:resp}):
    \sys ensures protocol safety and liveness even if the strong synchrony assumption is violated.

    \item \define{View Change} (\autoref{sec:protocol:viewchange}):
    \sys uses a standard view change protocol to handle replica failures.
    The protocol ensures committed commands are not lost across views.

\end{itemize}
\fi

\autoref{fig:replica_state} summarizes the local state stored on a \sys replica.
Note that \var{last-append} is only determined by the first empty slot in the log.
For concision, we assume that it is implicitly updated when a command is added to or removed from one log slot.
Also, a \sys replica only acknowledges a slot $s$ if it has received \emph{all} proposals up to slot $s$, i.e., the acknowledgment in \sys is \emph{append-only}.
This implies \var{last-ack} will not exceed \var{last-append}, since a replica never acknowledges a proposal before knowing of it.

\begin{figure}
\begin{tcolorbox}[title = {\sys Replica Local State}]
\textbf{Replica State}
\begin{itemize}
    \item \textbf{\var{log}} - replication log
    \item \textbf{\var{cmds}} - buffered client request commands
    \item \textbf{\var{view}} - current view number
    \item \textbf{\var{role}} - current role (initiator, candidate or follower)
    \item \textbf{\var{voted-for}} - the candidate that we voted for
    \item \textbf{\var{voted-by}} - the set of replicas that voted for us
    \item \textbf{\var{view-base}} - the first log slot of the current view
    \item \textbf{\var{next-propose}} - next log slot to propose commands
    \item \textbf{\var{last-append}} - the log slot before the first empty slot
    \item \textbf{\var{last-ack}} - the last log slot this replica has acknowledged in the current view
    \item \textbf{\var{last-commit}} - the last log slot this replica has committed
    \item \textbf{\var{acked}} - a set of acknowledgement indices, one for each replica
\end{itemize}
\end{tcolorbox}
\caption{Replica state in the \sys protocol}
\label{fig:replica_state}
\vspace{-1em}
\end{figure}

We present a formal correctness proof in \autoref{sec:safety}.

\subsection{Normal Operations in Pulsing Mode}
\label{sec:protocol:normal}

During normal operation, all replicas proceed in synchronous rounds.
Each round permits each replica to send one message.
Each live replica is assigned a subset of log indices for proposing commands.
By default, \sys uses a round-robin assignment scheme, i.e., replica $i$ is assigned log slots $n * R + i + \var{view-base}$ for all non-negative integers $n$.

Clients send \tuple{\msg{request}, \field{req-id}, \field{op}}, where \field{op} is an operation and \field{req-id} is a unique request number for \emph{at-most-once} semantics, to any replica.
The receiving replica buffers the tuple \tuple{\field{req-id}, \field{op}} in \var{cmds}.
In a future round, it multicasts a \tuple{\msg{propose}, \field{view}, \field{log-slot}, \field{ack-slot}, \field{cmd}}, where \field{log-slot} is its \var{next-propose}, \field{ack-slot} is its \var{last-ack}, and \field{cmd} consists of one or multiple tuples in its \var{cmds}.
The replica then advances \var{next-propose} to its next assigned log slot.

In a round $i$, each replica $r$ receives $R - 1$ proposals from $R - 1$ different replicas.
For each proposal \tuple{\msg{propose}, \field{view}, \field{log-slot}, \field{ack-slot}, \field{cmds}} sent by replica $r$, a receiving replica adds \field{cmd} to its \var{log} at index \field{log-slot}.
The replica then updates \var{ack-slot} to \var{append-slot}.
Next, it updates \var{acked[r]} to \field{ack-slot}.
The replica sorts \var{acked} in descending order and updates \var{last-commit} to \var{acked[q]}, where $q$ is the quorum size.
Intuitively, \var{acked}[q] indicates the longest complete log prefix that a quorum of replicas have received.
If \var{last-commit} advances, the replica executes all commands up to the new \var{last-commit}.
For each executed operation, if the replica initially handles the client request, it also sends a \tuple{\msg{reply}, \field{req-id}, \field{result}} to the client.

\subsection{Proposal Recovery in Pulsing Mode}
\label{sec:protocol:recover}

Suppose replica $r$ fails to receive a proposal $p$ in log slot $s$ due to network unreliability.
For any subsequent proposal beyond slot $s$, $r$ writes the proposal in \var{log} but cannot increase \var{last-ack} (i.e., there is a gap in the log at $s$).
For simplicity, let's ignore the mechanism that helps a replica learn of such an issue at the current point, which will be detailed in \autoref{sec:protocol:timer}.
After a potential drop of the proposal at $s$ is detected, in its next pulse, $r$ piggybacks a \tuple{\msg{propose-nack}, \field{view}, \field{nack-slot}} to the \msg{propose} in the normal case protocol, where \field{nack-slot} is $s$.
Suppose a replica that receives the \msg{propose-nack} finds that it has the proposal at \field{nack-slot} in its log, it multicasts a \tuple{\msg{propose-recover}, \field{view}, \field{recover-slot}, \field{cmd}} in its next pulse, where \field{recover-slot} is $s$ and \field{cmd} is the corresponding proposal $p$.
Besides, it piggybacked a \tuple{\msg{propose-noop}, \field{view}, \field{noop-slot}} where \field{noop-slot} is \var{next-propose}, indicating that a \cmd{no-op} is proposed for \var{next-propose}.
Similar to normal operation, the replica advances \var{next-propose} to its next assigned log slot.

Later, when replica $r$ receives the \msg{propose-recover} that contains $p$, it puts it in its log, which will increase \var{append-slot}.
$r$ then updates \var{last-ack} to the updated \var{append-slot}.

If the proposer of $p$ has not failed nor being network-partitioned, it will have $p$ in its log and thus will eventually help recover the proposal for other missing replicas.
Note that a single \msg{propose-recover} multicast can recover all missing replicas.
For liveness, the protocol only needs to handle the case in which the original proposer has failed, through the view change protocol (\autoref{sec:protocol:viewchange}).

\subsection{The Responsive Mode}
\label{sec:protocol:resp}

\sys falls back to a responsive mode when the network is not synchronous enough to form up rounds effectively.
In the pulsing mode, a \sys replica processes a message whenever it receives.
However, it only sends messages at the pulses.
In contrast, similar to a replica running conventional protocols, a \sys replica $r$ operating in the responsive mode sends messages \emph{responsively} when it receives messages from others.
When a new client request is received and \var{cmds} becomes non-empty, $r$ constructs a new proposal $p$ for \var{next-propose}($s$), updates \var{next-propose} to the next proposing slot, and multicasts the \msg{propose} immediately.
A replica that receives the \msg{propose} delivers $p$ to its logs, updates its \var{ack-slot} if possible, and multicasts a \msg{propose-ack} immediately if its current \var{ack-slot} is not smaller than $s$, i.e., it can acknowledge $p$.
$r$ also multicasts a \msg{propose-nack} without any delay when a potential proposal drop at slot $s'$ is detected.
All replicas that receive the \msg{propose-nack} multicast a \msg{propose-recover} instantly if it has the proposal for $s'$ in its log at once to help $r$ recover.

While the fundamental difference of the responsive mode to the pulsing mode is that replicas send messages in a more reactive way, there is also a difference regarding proposing \cmd{no-op}.
In the responsive mode, if a replica $r$ receives a \msg{propose-nack} with \field{nack-slot} being $s$, and it turns out that $s$ is assigned to $r$ while bigger than \var{next-propose}, $r$ proposes \cmd{no-op} for all assigned slots from \var{next-propose} to $s$ to allow proposals from other replicas to be committed.
It does not propose \cmd{no-op} when sending $propose-recover$ for other slots.

\subsection{View Change}
\label{sec:protocol:viewchange}

When a replica $r$ fails or is partitioned, the protocol stops making progress, since the remaining replicas will have ``holes'' in their logs -- slots assigned to $r$ -- and cannot execute subsequent operations.
To maintain liveness, replicas perform a \emph{view change} protocol when they suspect that $r$ has failed.

Suppose a replica $r'$ suspects that $r$ has failed.
It starts a new view by becoming the candidate of a new view and voting for itself.
$r'$ increments its \var{view} by 1, updates \var{role} to \emph{candidate}, \var{voted-for} to itself.
$s'$ then clears all buffered proposals in slots after \var{last-append} in its log and sets \var{next-propose} as $null$.
Besides, it updates the indices in \var{acked} to $0$ and \var{last-ack} to \var{last-commit}.
It also clears \var{voted-by} and puts its own ID inside.
After completing all of the above updates, $r'$ multicasts a \tuple{\msg{view-change-request}, \field{new-view}, \field{last-append-slot}, \field{last-ap} \\
\field{pend-view}}, where \field{new-view} is the updated \var{view}, \field{last-appen} \\
\field{d-view} is the view of the proposal at slot \var{last-append}.

When a replica $r''$ receives the \msg{view-change-request}, and the \field{new-view} is bigger than the local \var{view}, it compares its own log against the log of $r'$ using \field{last-append} and \field{last-append-view} in the message.

\begin{definition}
    Assume the last appended slot of log $L$ is $s$, with an attached view number of $v$,
    and the last appended slot of log $L'$ is $s'$, with an attached view number of $v'$,
    we say that $L$ is at least as up-to-date as $L'$ if and only if $v > v'$, or $v = v'$ and $s \geq s'$.
\end{definition}

If $r''$ finds that the log of $r'$ is not at least as up-to-date as its local log, it starts a view that is higher than \field{new-view} and becomes a candidate.
Otherwise, it votes for $r'$ in the new view with the following operations.
First of all, $r''$ updates \var{view} to \field{new-view}, \var{role} to \emph{follower}, \var{voted-for} to $r'$.
$s''$ then clears all buffered proposals in slots after \var{last-append} in its log and sets \var{next-propose} as $null$.
Also, it updates all indices in \var{acked} to 0, \var{last-ack} to \var{last-commit}.
After the above updates, $r''$ replies a \tuple{\msg{view-change-reply}, \field{new-view}, \field{voted-for}} to $r'$, where \field{voted-for} is $r'$.
$r''$ ignores any following received \msg{view-change-request} from other candidates for the same view.

When the candidate $r'$ receives a \msg{view-change-vote} for itself from the current view, it puts the sender's ID into \var{voters}.
When the size of \var{voters} reaches the quorum number, $r'$ updates its \var{role} to be \emph{initiator} and its \var{last-ack} to \var{view-base}.
It then proposes a \cmd{view-init} for slot \var{view-base}.
The \cmd{view-init} specifies a new slot assignment scheme starting from $\var{view-base} + 1$ that excludes replica $r$.
For simplicity, we require that the new slot assignment scheme doesn't take effect until it is committed.
So, at this point, the \var{next-propose} of $r'$ is still $null$.
This prevents $r'$ from further proposing.

If a replica receives a \msg{propose} from the same view or a higher view, it directly becomes the follower of the new view's initiator and performs the same updates as $r''$.

When the follower $r''$ receives the \msg{propose} containing the \cmd{view-init}, it clears all proposals from \var{view-base} in its log, and delivers \cmd{view-init} to slot \var{view-base}.
Since $r''$ has updated its \var{last-ack} to \var{commit-ack} previously when it voted for $r'$, the local state of $r''$ now satisfies $\field{last-ack} \leq \field{last-append} \leq \field{view-base}$.
This means that proposals are available for slots between \var{last-ack} and \var{last-append}.
However, it is yet to confirm whether those proposals to are consistent with the log of $r'$.

To catch up with $r'$ and commit the \cmd{view-init}, $r''$ sends \msg{propose-nack} for slots between \var{last-ack} and \var{view-base}.
When $r'$ receives a \msg{propose-recover} with a proposal $p$ for a slot $s$, which satisfies $\field{last-ack} < s \leq \field{last-append}$, it checks whether the attached view of the local proposal is consistent with $p$'s attached view.
When the two attached views are equal, $r'$ updates its \var{last-ack} to $s$.
Otherwise, it implies that the local proposals from slot $s$ are inconsistent with the initiator's log.
So, it clears all of those proposals, which decreases \var{last-append} to $s-1$.

A replica never sends \msg{propose-recover} for slots which is bigger than \msg{last-ack}.
This ensures that the recovered proposals are consistent with the initiator's log.

After enough number of followers catch up, the \cmd{view-init} becomes committed.
For a certain replica, it updates its \var{next-propose} to the first assigned slot specified by the \cmd{view-init} when it is locally committed.
This allows it to resume normal case operations.
A replica doesn't commit slots by sorting \var{acks} before \cmd{view-init} is committed.
When it commits \cmd{view-init}, it also commits all previous proposals.

If a candidate fails to collect a quorum of votes after a timeout, it retries the view change by becoming a candidate of a higher view.
All messages are tagged with the local \var{view}.
Messages with \field{view} lower than the local \var{view} are ignored.

\subsection{The Coordination Timer}
\label{sec:protocol:timer}

When a replica $r$ is assigned slots in the current view, it starts a coordination timer.
$r$ uses this timer to facilitate committing locally proposed commands.
At a coordination timeout, assume $s$ is the first slot that has been proposed by $r$ and is bigger than \var{last-commit}, $r$ checks the state of slots from $\var{last-ack} + 1$ to $s$.
If there are empty slots (i.e., $\var{last-ack} < s$), implying that those proposals may be missed, $r$ multicasts \msg{propose-nack} for all of those slots.
If $\var{last-ack} \geq s$, it implies that some replicas may have missed the proposal for $s$.
It then re-multicasts a \msg{propose} for the slot.
$r$ then resets the coordination timer.
During pulsing mode, $r$ delays the processing to the next pulse and sends messages according to \autoref{sec:protocol:recover}.

Unnecessary triggers of the coordination timer impair the system by letting the replica transmit more messages than needed.
What is worse, the unnecessary messages further lead to extra processing for other replicas.
Especially, replicas are likely to send \msg{propose-recover}s when they receive \msg{propose-nack} from others.
In the pulsing mode, sending a \msg{propose-nack} implies that the sender cannot multicast a new proposal in the round, leading to significant performance degradation.
To handle this problem, a proposer resets the coordination timer whenever it commits a proposal from itself.

\subsection{Optimizations}

\subsubsection{Catching up by Skipping}

Occasionally, a replica may lag behind other replicas, creating many gaps in the log.
These gaps block the system by preventing other replicas' proposals from being acknowledged and committed.
To handle this issue, similar to Mencius~\cite{mencius2008}, \sys allows skipping by letting proposers to propose \cmd{no-op}s consecutively over multiple assigned slots. 
A \tuple{\msg{skip}, \field{view}, \field{slot-start}, \field{slot-until}} denotes that the sender proposes \cmd{no-op} for every slot assigned between \field{slot-start} and \field{slot-until} in \field{view}.

A \sys replica eagerly skips to avoid blocking the system.
It maintains a \var{latest-propose-slot} to keep track of the latest slot proposed by any replica.
If feasible, \sys skips to the \var{latest-propose-slot} and piggybacks the \msg{skip} before the next message transmission.
Replicas attach the \var{latest-propose-slot} to every transmitted message to help each other catch up.

\subsubsection{Configurable Number of Proposers}

While multi-leader protocols benefit from load-balancing for higher throughput, they offer worse latency compared to single-leader protocols for lighter workloads.
There is a wait for all proposers to progress in multi-leader protocols; single-leader protocols only require the fastest quorum to progress.
To bridge this gap, \sys allows slot assignment to any number of replicas in a view.
In the special case of all slots assigned to a single replica, \sys turns into a typical single-leader-based protocol, providing the optimal 1 RTT commit latency.
The system can adjust the number of proposers dynamically during runtime depending on the workload with a view change.

\subsubsection{Proposer Accountable Recovery}

From \autoref{sec:protocol:recover}, a \msg{propose-nack} received by any replica forces the replica to multicast the proposal in its log with another \msg{propose-recover}.
Although this process facilitates recovery to a great extent, it also introduces considerable overhead.
The \msg{propose-recover}s prevent all other senders from multicasting useful proposals during the same round, which can severely impact system performance.
To mitigate this overhead, we implement a strategy where, under normal operating conditions, only the original proposer is responsible for recovering its own proposals.
Other nodes do not respond to the \msg{propose-nack}.
However, during view changes, all replicas actively participate in the recovery process to expedite it.


\section{Evaluation}
\label{sec:evaluation}

\begin{figure*}[htp]
\begin{center}
\includegraphics[width=1\linewidth]{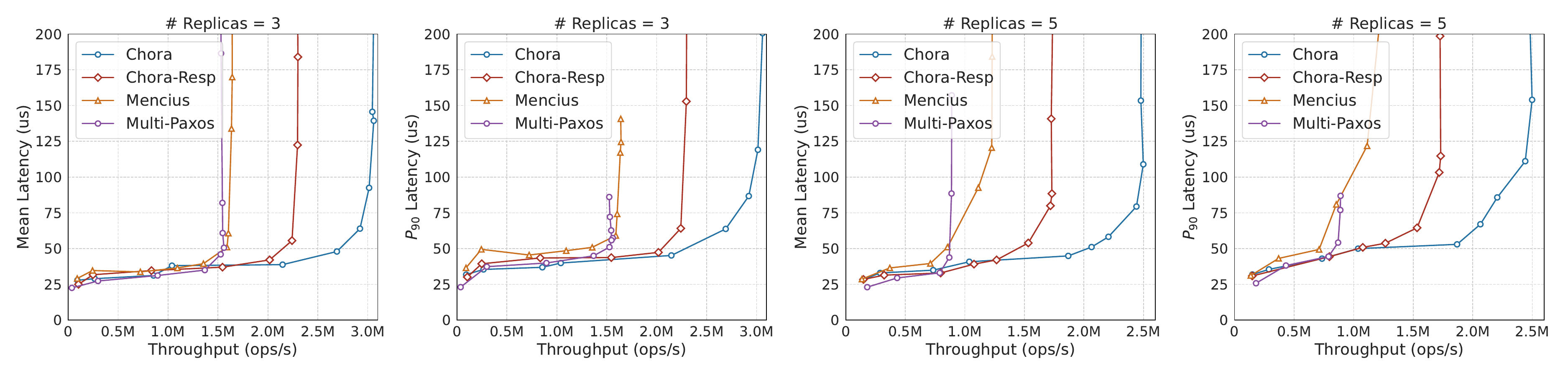}
\end{center}
\vspace{-1.5em}
\caption{
Latency-throughput graphs for mean latency and $P_{90}$ latency with 3 and 5 replicas.
}
\label{fig:latency_tput}
\vspace{-.5em}
\end{figure*}

We implemented \sys as a C library and ran it and other protocols using DPDK 23.11.0 with NVIDIA Mellanox ConnectX-5 and 4 isolated Intel(R) Xeon(R) Gold 6230 CPU @ 2.10GHz.
Batching (including the adaptive batching in \autoref{sec:system}) was disabled for all protocols.

\subsection{Latency vs. Throughput}

As a first experiment, we test the latency and throughput of \sys and compare it with existing state-of-the-art protocols.
\autoref{fig:latency_tput} shows the latency \emph{vs.} throughput variation across different protocols for 3 and 5 replicas.
As shown in the figure, the throughput of multi-leader protocols surpasses Multi-Paxos, a representative of single-leader protocols.
The presence of multiple leaders enables the system to process more client requests, resulting in higher throughput.
\sys leverages the time-slotted network structure to efficiently pipeline requests and processing to improve throughput.

\begin{table}[tbp]
  \centering
  \resizebox{0.95\linewidth}{!}{
    \begin{tabular}{ >{\centering\arraybackslash}m{2.0cm} >{\centering\arraybackslash}m{1.8cm} >{\centering\arraybackslash}m{1.8cm} >{\centering\arraybackslash}m{1.8cm} }
    \toprule
    \multirow{2}{*}{\textit{Mops/s}} & \multicolumn{3}{c}{Number of Replicas} \\
\cmidrule(r){2-4}
& 3 & 5 & 7 \\
        \midrule
          Multi-Paxos                   & 1.52\stdmode{(-50\%)}                   & 0.89\stdmode{(-64\%)}             & 0.69\stdmode{(-72\%)}  \\
          Mencius                       & 1.64\stdmode{(-46\%)}                   & 1.23\stdmode{(-51\%)}             & 1.17\stdmode{(-52\%)}  \\
          \sys-Resp                  & 2.30\stdmode{(-25\%)}                   & 1.73\stdmode{(-31\%)}             & 1.59\stdmode{(-35\%)}  \\
          \textbf{\sys}              & \textbf{3.01}         & \textbf{2.50}   & \textbf{2.44}  \\
        \bottomrule
    \end{tabular}
    }
  \caption{Maximum throughput of different protocols with different numbers of replicas.}
  \label{tab:throughput}
  \vspace{-1em}
\end{table}


\sys's improved performance over Mencius in the responsive mode is attributed to its append-only design.
Mencius uses an independent acknowledgement for every log slot, while the append-only design allows a \sys replica to acknowledge multiple slots simultaneously, hence reducing the message overhead.
The throughput gain of the pulsing-mode \sys compared to the responsive mode demonstrates its effectiveness in exploiting synchrony. 

The throughput gap between \sys and the other protocols becomes noticeably larger with a higher number of replicas, demonstrating improved scalability of the protocol.
The first column of \autoref{fig:latency_tput} represents the mean latency, while the second column represents the $90^{th}$ percentile of the latency distribution.
On the one hand, we observe that the tail latency impacts Multi-Paxos for both three and five replicas; the impact of tail latency on Mencius reduces as the number of replicas increases.
\sys, on the other hand, maintains a steady latency for all requests across different quorum sizes, yielding a consistent performance.

\autoref{tab:throughput} shows the maximum throughput of different protocols.
The trend clearly shows the scalability of \sys with performance improvement as more replicas are introduced.
In the 7-replica setup, pulsing-mode \sys gains 255\%, 109\%, and 55\% higher throughput than Multi-Paxos, Mencius, and responsive-mode \sys, respectively.

\subsection{Impact of Round Length}

\begin{figure}[t]
\vspace{-.4em}
\centering
\begin{subfigure}[b]{0.23\textwidth}
  \includegraphics[width=\linewidth]{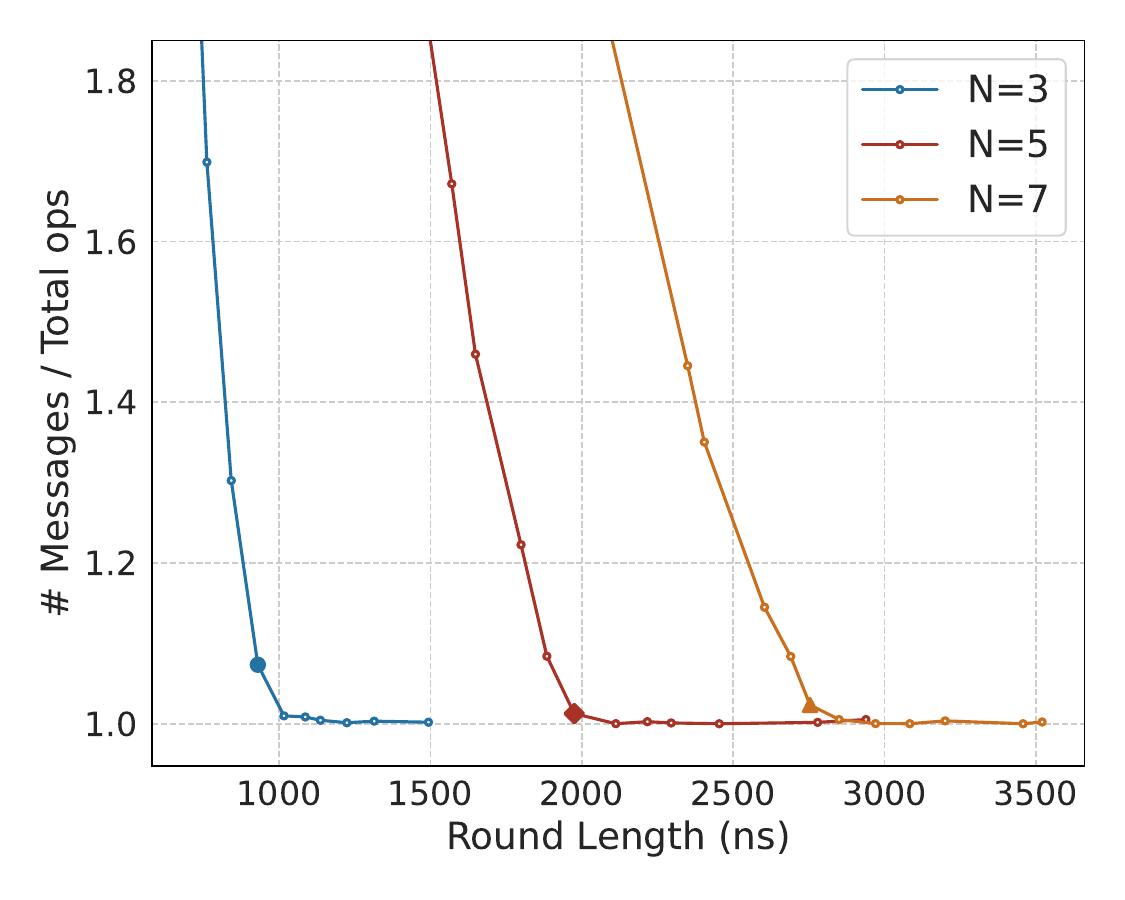}
  \caption{Average number of messages}
  \label{fig:round_length_nmsg}
\end{subfigure}%
\hspace{.1em}
\begin{subfigure}[b]{0.23\textwidth}
  \includegraphics[width=\linewidth]{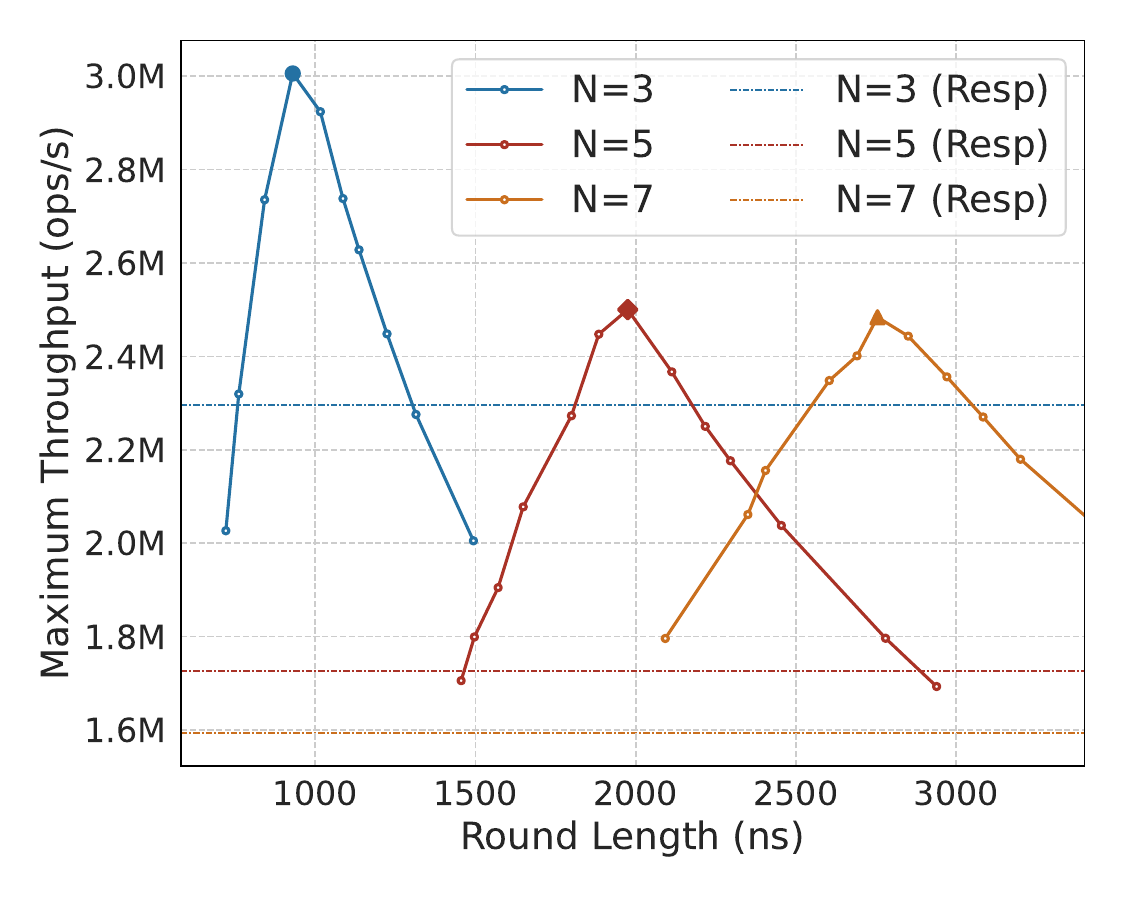}
  \caption{Maximum throughput}
  \label{fig:round_length_tput}
\end{subfigure}
\vspace{-.5em}
\caption{Impact of round length on system performance.}
\vspace{-.5em}
\end{figure}

We studied the impact of round length on \sys's performance.
\autoref{fig:round_length_nmsg} demonstrates the relationship between round length and the effectiveness of pipelining.
It shows the variation of the average number of broadcast messages per commit over the changes in round length at maximum throughput.
From the figure, we observe that if the round length is too small (e.g., <1000ns for 3 replicas), it takes more than 1 message on average for one commit.
This denotes that the round is not long enough to finish all expected processing, thus breaking the effectiveness of pipelining.
However, as the round length increases, the leader has sufficient time to finish its processing within the same round.
This enables effective pipeline processing and improves the maximum throughput to the optimal value.
Beyond this value, increasing the round length does not help to improve the pipelining anymore.
Since the time is already enough for replicas to finish all processing.

\autoref{fig:round_length_tput} shows the throughput variation for different round lengths.
Initially, the throughput increases with the increase in round length because of more effective pipelining.
The increase peaks around the point where the system can achieve optimal pipelining, such that every round can finish processing the messages (the highlighted points in the figure).
Any further increase in the round length results in under-utilization, and hence, a drop in throughput is observed.
Hence, by choosing an optimal round length, we can maximize the throughput.
Furthermore, the horizontal lines indicate the maximum throughput of responsive-mode \sys.
It can be observed that the pulsing mode provides higher throughput in a wide range of round length configurations.

\subsection{Impact of Synchrony}

To study the impact of synchrony on \sys's performance, we configured an interval for the receiving thread in \autoref{sec:system} to uniformly sample delays for received messages.
By adjusting the range, we were able to simulate synchrony of different levels.
We conducted experiments in a 3-replica setup, with a sample interval of 10$\mu$s, 3000$\mu$s, or 6000$\mu$s.
We measured the metrics defined in \autoref{sec:network:round}.
These sample intervals result in a $S^{90}$ to be 0.26, 0.11, and 0.05, respectively.
This validates that a more synchronous system has a greater synchrony coefficient.

\autoref{fig:sync} is the efficiency-effectiveness graph plotted according to the result.
For a specific system, the figure shows the trade-off that a higher round efficiency leads to a lower round effectiveness.
The more synchronous the system is, the greater $\kappa$ it has, which indicates a better performance.
These results validate our analysis in \autoref{sec:network:analysis}

\subsection{Replica Crash}

We measured the throughput of \sys during a replica crash (\autoref{fig:viewchange}).
We ran \sys around peak throughput and then simulated a crash failure by stopping the \sys DPDK application on one replica.
Other replicas detected the crash with heartbeat timers and started a view change to exclude the crashing replica.
After the view change is committed, the replicas also shortened their round timeouts.
As a result, the throughput remains almost the same after the view change.
As illustrated in the figure, across multiple experiments conducted, the system consistently took approximately 2ms to resume processing at a similar throughput.

\begin{figure}[t]
\begin{center}
\vspace{-.5em}
\includegraphics[scale=0.275]{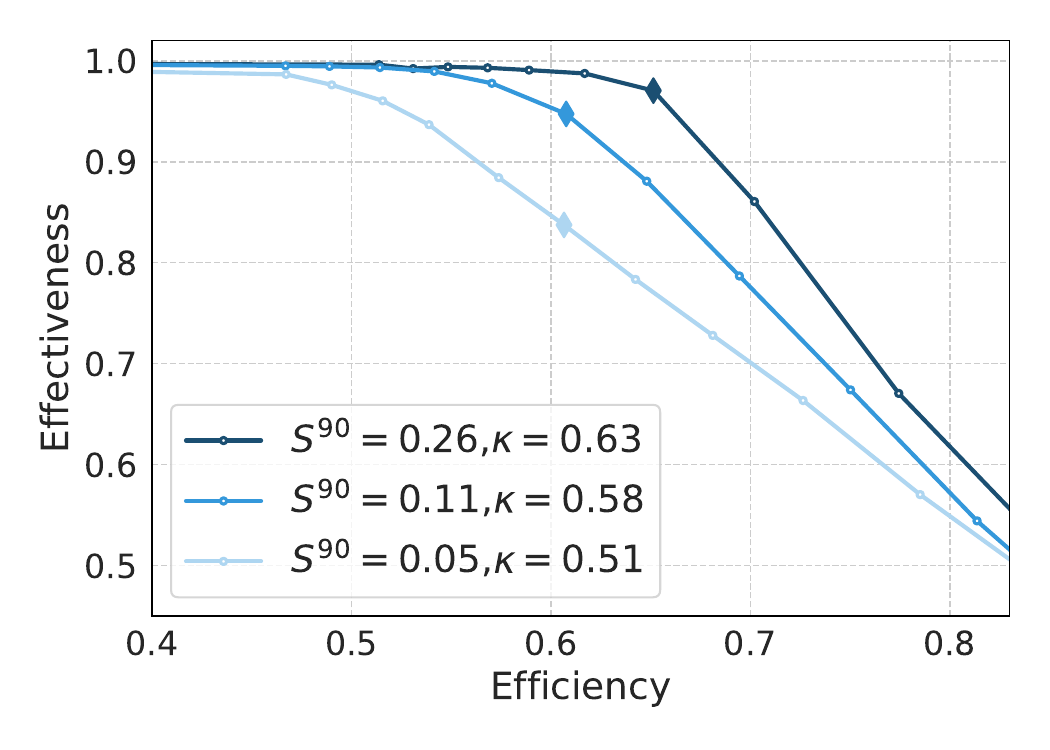}
\end{center}
\vspace{-1.5em}
\caption{Impact of synchrony on system performance.}
\label{fig:sync}
\end{figure}

\begin{figure}[t]
\begin{center}
\vspace{-.5em}
\hspace{-.7em}
\includegraphics[scale=0.275]{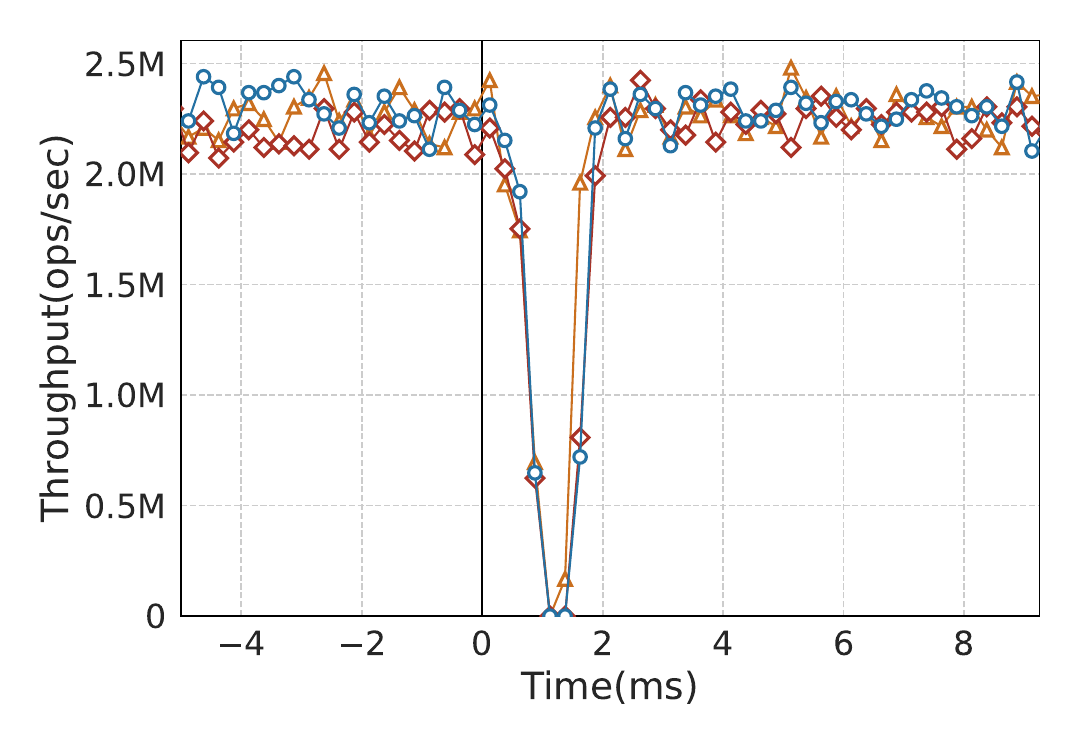}
\end{center}
\vspace{-1.5em}
\caption{
Throughput of \sys during view changes.
}
\label{fig:viewchange}
\end{figure}
\section{Discussion}
\label{sec:discussion}

Traditional systems employ batching to improve the system performance in comparison to \emph{single-request} processing.
Batching refers to combining multiple requests as input to a system to maximize the utilization of an expensive resource.
Consequently, batching minimizes the number of requests to access the resource.
Some examples include HTTP pipelining to minimize TCP connections over the network, multi-row updates to a database reducing the disk access, and using multiple subsets of data for training a machine learning model.

Our system considers batching across multiple layers in a hierarchy.
First, the initial layer of batching combines multiple requests into a single proposal, which naturally boosts throughput.
Furthermore, unlike existing protocols, several proposals are sent in each round.
By combining the two forms of batching, the throughput improvement is amplified.


Other possible optimizations:
First, time synchronization can be embedded within the protocol messages from NIC hardware timestamps.
The co-design of network and protocol layers enables the coordination of messages based on protocol states.
Next, message processing can be further improved to reduce the tail latency.
Further, programmable hardware such as SDN switches~\cite{netpaxos} and FPGAs~\cite{consensusbox2016} can offload consensus protocols.
Combining these optimizations, \sys can potentially offer even better performance.




\ifspace
\paragraph{Future Research}
In this paper, we explore the use of network synchrony to improve the performance of state machine replication protocols.
Our approach, however, can be generalized to other distributed protocols.
For instance, distributed transactional protocols can leverage strong synchrony to accelerate two-phase commit and concurrency control protocols.
For systems that layer replication on top of a transactional protocol~\cite{spanner2013,eris,hydra}, strong synchrony can also be applied cross-stack to provide further performance benefits.
\fi

\section{Conclusion}

In this work, we take a concrete step to demonstrate that practical networks can be engineered to provide strong synchrony in the common case.
Such synchrony properties not only simplify distributed protocols but also can be exploited to improve their processing efficiency.
We show the potential of this approach by co-designing a new protocol, \sys.
\sys uses network synchrony to enable streamlined consensus instance pipelining, while fully utilizing all server resources.

\label{sec:conclusion}

\bibliographystyle{plain}
\bibliography{references}

\appendix
\section{Safety Proof}
\label{sec:safety}

We consider the problem of replicating a \define{log} across $N=2f+1$ \define{replicas} using the \sys protocol presented in \autoref{sec:protocol}.
Each replica has a local log.
The log consists of a infinite series of \define{slots}, starting from index 1.
Replicas propose \define{commands} for the log slots.
Each log slot can be either empty, or contain one unique command.
We present the formal proof of the following theorem:

\begin{theorem} \label{theorem:safety}
\emph{Safety:} If a command $c$ is committed at slot $s$, no other command can be committed at $s$.
\end{theorem}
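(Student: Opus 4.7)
The plan is to adapt the standard VR/Paxos safety argument to \sys's multi-leader, round-based structure, relying on four invariants: (i) quorum intersection for $N=2f+1$, (ii) append-only acknowledgments so that a committing replica has a gap-free log prefix up to the committed slot, (iii) the round-robin assignment giving each slot a unique proposer within a view, and (iv) the up-to-date comparison on $(\text{attached view}, \text{last-append-slot})$ required for granting view-change votes. I will state the theorem as an inductive claim on view numbers and reduce the single-view case and the cross-view case to these invariants.

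For same-view safety, I first argue that within a view $v$ each slot $s \geq \var{view-base}$ has a single designated proposer, and that proposer advances \var{next-propose} monotonically, so it emits at most one \msg{propose} tagged with view $v$ at slot $s$. Combined with the append-only acknowledgment rule -- a replica only raises \var{last-ack} to $s$ after receiving every proposal up to $s$ -- this means every member of any committing quorum holds the same prefix through $s$. Hence within view $v$ at most one command is committed at $s$.

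For cross-view safety I will prove by strong induction on $v' \geq v$ the lemma: if $c$ is committed at $s$ in view $v$, then every log entry at slot $s$ with attached view in $[v, v']$ equals $c$, and every subsequent view's initiator $r'$ must have $c$ at slot $s$ in its log before it can extend. The inductive step uses quorum intersection on the committing quorum $Q_1$ for $c$ and the voting quorum $Q_2$ electing $r'$: some $r^* \in Q_1 \cap Q_2$ voted for $r'$ only after checking $r'$'s log is at least as up-to-date as its own. Because $r^*$ acknowledged through $s$, its \var{last-append-slot} $\geq s$ with attached view $\geq v$; the up-to-date comparison then forces $r'$'s contiguous prefix to extend to slot $s$ with attached view in $[v, v'-1]$, and the inductive hypothesis identifies that entry as $c$. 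The candidate's pre-vote step of clearing proposals beyond \var{last-append} ensures we reason only about this contiguous prefix.

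The main obstacle will be closing the argument for replicas whose \var{last-ack} sits below \var{view-base} when the new view begins and who must fill the gap via \msg{propose-nack}/\msg{propose-recover} before \cmd{view-init} commits. I will show that the responder rule "never send \msg{propose-recover} for slots above \var{last-ack}" together with the view-consistency check inside \msg{propose-recover} prevents any recovered entry at slot $s$ from disagreeing with $c$: a responder can only serve slot $s$ if it already acknowledged $s$, which by the append-only invariant means it held the unique view-$v$ proposal at $s$ when it voted, and a mismatched attached view triggers truncation rather than overwrite. With this lemma in hand, concluding \autoref{theorem:safety} is routine -- a second commit at $s$ would require a quorum whose intersection with $Q_1$ witnesses both $c$ and the alternative command, contradicting either same-view uniqueness or the inductive invariant.
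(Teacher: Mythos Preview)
Your overall plan is the same Raft-style ``leader completeness'' induction that the paper uses, but the inductive step in your cross-view argument has a real gap. When the up-to-date comparison between the witness $r^{*}\in Q_1\cap Q_2$ and the candidate $r'$ resolves by $r'$ having a \emph{strictly larger} attached view $w$ at its last-append, you assert that ``$r'$'s contiguous prefix extends to slot $s$ with attached view in $[v,v'-1]$''. That does not follow from the comparison alone: nothing in the comparison bounds $r'$'s \var{last-append} from below by $s$. The paper closes this case with additional machinery you have not built: a Log Matching Property (two logs agreeing on an appended entry at the same slot and view agree on the entire prefix), together with the fact that $w$'s \cmd{view-init} must appear in $r'$'s log. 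One then matches $r'$'s prefix against the base log of the view-$w$ initiator, and only then does the inductive hypothesis (``$w$'s initiator already has $c$ at $s$'') place $c$ in $r'$'s log. Without a log-matching lemma, your induction cannot carry through this branch.

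There is a second, smaller hole. You use ``$r^{*}$ acknowledged through $s$'' to conclude that $r^{*}$'s \var{last-append} $\geq s$ with attached view $\geq v$ \emph{at the moment it votes for $r'$}. But the acknowledgment happened in view $v$, and between $v$ and $v'$ the replica $r^{*}$ may have participated in intermediate view changes and truncated entries. The paper handles this explicitly: a replica removes an entry only when it is inconsistent with the current initiator's log, and by the inductive hypothesis every intermediate initiator already holds $c$ at $s$, so $r^{*}$ never truncates $c$. Your proposal's inductive hypothesis does contain the clause about intermediate initiators, so this is repairable, but you need to invoke it to justify persistence of $c$ in $r^{*}$'s log at voting time rather than silently assuming it.
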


Let's first clarify some conventions for ease of illustration.
When \emph{any log} is referred to, we are discussing all possible local logs of any replica at any given time.
For a certain log, we denote the command and at slot $s$ using $s.cmd$ and. 
For a certain command $c$, we denote its view as $c.view$.
We say that a command at slot $s$ is an \emph{appended command} if $s \neq \var{append-slot}$.
When a command is \emph{appended} to a log at slot $s$, we denote the cases where the command is added to the log at $s$, and it becomes an appended command after the addition.  
When a command is \emph{committed}, it denotes that the command is locally committed in some log.
Also, we assume that there is a $null$ command at slot 0 of all logs, and the command's view is 0.

In \sys, there is a basic requirement for all replicas, which is trivial because no malicious node is considered:

\begin{fact}\label{fact:unique_proposal}
If a replica $r$ proposes a command $c$ at slot $s$ in view $v$, it never proposes a different command at $s$ in $v$.
\end{fact}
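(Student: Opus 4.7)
The plan is to establish \autoref{fact:unique_proposal} as a direct consequence of the deterministic bookkeeping of \var{next-propose} on a non-Byzantine replica. The key observation is that every protocol action in which replica $r$ attaches a genuinely new command to a slot $s$ is guarded by $s = \var{next-propose}$ (or $s$ lies in a range starting at \var{next-propose}), and each such action strictly advances \var{next-propose} past $s$ before the replica is allowed to propose again. Since the fact is trivial in spirit, I treat it essentially as a ``code review'' of the state transitions, and organize the argument by enumerating proposal sites.

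First, I would enumerate every point in the protocol at which $r$ can emit a fresh command: the normal-case \msg{propose} in pulsing mode (\autoref{sec:protocol:normal}), the \msg{propose-noop} piggybacked on \msg{propose-recover} (\autoref{sec:protocol:recover}), the responsive-mode \msg{propose}, the responsive-mode no-op fill from \var{next-propose} up to a nacked slot (\autoref{sec:protocol:resp}), and the \msg{skip} optimization. Crucially, I would observe that \msg{propose-recover} itself is \emph{not} a fresh proposal: it retransmits a command already sitting in the replica's \var{log}, so any two \msg{propose-recover}s by $r$ for the same $(s,v)$ must carry identical payloads by construction.

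Second, I would prove the invariant that within any single view $v$, $r$'s local \var{next-propose} is monotonically non-decreasing and, immediately after a fresh proposal at slot $s$, is set to a value strictly greater than $s$. This follows by inspection of each site above: every one of them explicitly ``advances \var{next-propose} to its next assigned slot,'' and the only resets of \var{next-propose} (to $null$, then re-initialized when a \cmd{view-init} commits) occur when \var{view} strictly increases. Hence \var{next-propose} cannot revisit any slot during the lifetime of view $v$, so $r$ cannot be enabled to propose at $s$ a second time while in $v$.

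Combining the two observations yields the claim: after $r$ emits its first command $c$ at $(s,v)$, its own \var{next-propose} is strictly greater than $s$ for as long as \var{view} equals $v$, so no subsequent action in view $v$ is enabled to create a different command at $s$, and any further transmissions touching $(s,v)$ are re-broadcasts of the already-stored $c$. The only non-routine step is the case analysis for responsive mode and the skip optimization, where several slots can be filled in one action; there I would verify explicitly that the range filled is $[\var{next-propose}, s]$ and that \var{next-propose} is advanced past the entire range before the next proposal, so equivocation would require $r$ to violate its prescribed update rule, which the non-Byzantine assumption forbids.
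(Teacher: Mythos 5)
Your argument is correct, but it is considerably more elaborate than what the paper does: the paper offers no proof of \autoref{fact:unique_proposal} at all, introducing it as ``a basic requirement for all replicas, which is trivial because no malicious node is considered,'' i.e., it is taken as axiomatic of honest, crash-only replicas. Your route instead grounds the fact in the protocol's actual state transitions --- enumerating every site that emits a fresh command, showing each is guarded by \var{next-propose}, proving that \var{next-propose} only advances within a view and is reset only when \var{view} strictly increases, and separating genuine proposals from retransmissions (\msg{propose-recover} and the coordination-timer re-multicast, which replay the command already stored in \var{log}). What the paper's treatment buys is brevity and a clean separation between the fault model and the safety argument; what yours buys is an actual verification that the protocol as specified cannot equivocate even unintentionally, which is not vacuous --- an implementation that, say, re-read from \var{cmds} on retransmission rather than from \var{log}, or failed to advance \var{next-propose} past the full range in the responsive-mode no-op fill or the \msg{skip} path, would violate the fact despite being non-Byzantine. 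The one thing to keep straight, which you do, is that the \var{next-propose} reset to $null$ during view change is harmless here because it is always accompanied by a strict increase of \var{view}, and views are never revisited; so the monotonicity invariant you need only has to hold per view, not globally over slots.
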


Also, from the protocol, we can notice that:
\begin{fact}\label{fact:initiator_add}
An initiator only adds commands from the current view to its log. 
\end{fact}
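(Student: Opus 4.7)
The plan is to prove Fact~\ref{fact:initiator_add} by an exhaustive case analysis on the code paths along which an initiator's local log can be mutated, and to argue that in every such path the attached view of the inserted command equals the initiator's current \var{view}. A replica enters the \emph{initiator} role only at one precise transition: when a candidate in view $v$ collects a quorum of \msg{view-change-vote}s in $v$ and, by the rule in \autoref{sec:protocol:viewchange}, sets \var{role} to \emph{initiator} and proposes the \cmd{view-init} at \var{view-base}. So I would start by fixing such a replica $r'$ with current view $v$ and then enumerate the operations that can extend its log while \var{role} remains \emph{initiator}.

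I see three candidate sources of log additions, and would handle them in this order. (i) The \cmd{view-init} proposal at slot \var{view-base} is constructed by $r'$ itself at the moment it transitions to \emph{initiator} in view $v$; by the protocol, outgoing \msg{propose}s are stamped with the sender's current \var{view}, and by \autoref{fact:unique_proposal} this tagging is consistent, so the command added at \var{view-base} carries view $v$. (ii) Normal-case \msg{propose} messages received from peers: these are appended in \autoref{sec:protocol:normal}, and each carries the sender's \field{view}. The message-filtering rules at the end of \autoref{sec:protocol:viewchange} drop any \msg{propose} with \field{view} $< \var{view}$, so only same-view or higher-view proposals survive. For a higher-view \msg{propose}, the rule ``If a replica receives a \msg{propose} from the same view or a higher view, it directly becomes the follower of the new view's initiator'' fires \emph{before} the append, so $r'$ relinquishes the \emph{initiator} role; any addition that actually occurs while $r'$ is still an initiator must therefore involve a same-view proposal, i.e.\ attached view $v$. (iii) The catch-up phase driven by \msg{propose-recover} between \var{last-ack} and \var{last-append}: here the initiator only compares attached views and either advances \var{last-ack} or truncates the log, so this path performs no append and is vacuous for the fact.

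I would then tie these three cases together by observing that the candidate-transition step preceding initiator-hood clears all buffered proposals strictly after \var{last-append}, so at the moment $r'$ becomes initiator its log has no slot tagged with view $v$ yet, and every subsequent addition falls under (i) or (ii) above. Combining them yields that every command appended by $r'$ while it is the initiator of view $v$ carries attached view $v$, which is the claim.

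The main obstacle I expect is not algebraic but bookkeeping: being exhaustive about log mutation points. In particular, I must justify that (a) the catch-up exchange truly never overwrites a slot with a foreign-view command (only the local prefix is kept or truncated, never rewritten), and (b) the ``becomes a follower of a higher view'' rule is evaluated strictly before the append in case (ii), so there is no window in which an initiator could append a higher-view command before losing its role. Both are protocol-reading obligations rather than deep arguments, but they are where a careless case split would leak.
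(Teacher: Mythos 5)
The paper offers no proof of this statement at all: it is introduced with ``from the protocol, we can notice that'' and then used as an axiom in \autoref{claim:propose_then_add} and \autoref{claim:initial_append}. So your exhaustive walk over the initiator's log-mutation paths already does more than the paper, and the overall shape --- enumerate every code path that can insert a command and check the attached view on each --- is the right and essentially only way to discharge it. Your cases (i) and (ii) are sound, including the observation that a higher-view \msg{propose} demotes the replica to follower before any append, so nothing is ever added \emph{as an initiator} with a foreign view.

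The gap is in case (iii). You treat \msg{propose-recover} only in its view-change catch-up role, where the receiver merely compares attached views and either advances \var{last-ack} or truncates, and you conclude that this path ``performs no append.'' But the paper's own \autoref{fact:add} lists \msg{propose-recover} as a bona fide addition route in normal operation (\autoref{sec:protocol:recover}: ``when replica $r$ receives the \msg{propose-recover} that contains $p$, it puts it in its log''), and an initiator that missed a peer's pulse uses exactly this path; the same goes for \cmd{no-op}s installed via \msg{propose-noop} and \msg{skip}. These paths do preserve the fact --- the initiator's log is contiguous up to \var{last-append}, \var{view-base} sits just past it, so every empty slot the initiator could ever fill lies at or beyond \var{view-base}, such slots are only proposed in the current view, and stale recoveries are filtered by the rule that messages with a lower \field{view} are ignored --- but your case split as written does not cover them, which is exactly the kind of leak you yourself flag as the risk. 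Closing it requires making the contiguity of the initiator's log explicit; your remark that the log ``has no slot tagged with view $v$ yet'' speaks to what is already in the log rather than to which slots can still be filled and by what.
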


\begin{fact}\label{fact:add}
A command can and only can be added to a log in two cases:
\begin{enumerate}
  \item A replica learns a proposal from the same view with a \msg{propose} or a \msg{propose-recover}.
  \item A replica learns the command with a \msg{propose-recover} from an initiator.
\end{enumerate}
\end{fact}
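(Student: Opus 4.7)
}
The plan is to prove this by an exhaustive case analysis over the protocol code paths in \autoref{sec:protocol}, showing that every point at which a log slot transitions from empty (or is overwritten) to containing a command falls into one of the two enumerated categories. First I would catalogue every message handler and local action that can write into \var{log}. Scanning the pulsing-mode normal operation (\autoref{sec:protocol:normal}), the only writer is the \msg{propose} handler, which stamps \field{cmd} into slot \field{log-slot}; since the protocol tags outgoing \msg{propose}s with the proposer's current \var{view}, and since a \msg{propose} with a \field{view} smaller than the recipient's \var{view} is discarded (see the view-change rule), any successful addition here satisfies case (1). The symmetric responsive-mode handler (\autoref{sec:protocol:resp}) behaves identically on the log and reduces to case (1) by the same argument. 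The self-insertion that occurs when the proposer stamps its own proposal into its log likewise qualifies as case (1), since the proposer uses its own current \var{view}.

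Next I would handle \msg{propose-recover}. In the pulsing and responsive recovery paths (\autoref{sec:protocol:recover}), a replica that receives a \msg{propose-recover} for \field{recover-slot} containing \field{cmd} writes it into that slot; the message carries a \field{view} field, and the rejection of stale-view messages again forces \field{view} to equal the recipient's current \var{view}, placing the addition in case (1). The remaining occurrences arise in the view-change protocol (\autoref{sec:protocol:viewchange}): the follower receiving the \cmd{view-init} \msg{propose} (case (1)), the candidate/initiator reconciling its log with \msg{propose-recover}s from followers during catch-up, and the followers themselves being brought forward. The only one of these that can place a command whose attached view differs from the recipient's current \var{view} is the initiator-driven recovery of pre-view commands; this is precisely case (2), and by Fact \ref{fact:initiator_add} the initiator only emits such a recovery when reconciling against its own log, which the text requires it does only after being elected. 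All other view-change insertions either clear slots (not additions) or install same-view commands, so they fold into case (1).

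Finally I would argue completeness: no other message type (\msg{request}, \msg{reply}, \msg{propose-nack}, \msg{propose-noop}, \msg{skip}, \msg{view-change-request}, \msg{view-change-reply}, heartbeats) carries a command payload or triggers a log write at a non-empty/empty boundary; the only other log-touching operations are the clears performed at view-change time, which remove rather than add commands. Since every code path that strictly adds a command has now been accounted for, the fact follows.

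The main obstacle I expect is the view-change branch: it is easy to overlook the asymmetric role of the initiator, and one must argue carefully that (i) the clearing steps (of all proposals after \var{last-append}, and again of proposals from \var{view-base}) are not themselves additions, and (ii) the only cross-view command that can be installed is one learned through an initiator-driven \msg{propose-recover}, so that the ``from an initiator'' qualifier in case (2) is both sufficient and necessary. Once this asymmetry is nailed down, the rest of the case analysis is mechanical.
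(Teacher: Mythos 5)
The paper offers no proof of this Fact at all --- it is asserted as a direct observation (``from the protocol, we can notice that'') --- and your exhaustive enumeration of every code path that writes into \var{log}, sorting each into case (1) or the initiator-driven cross-view recovery of case (2), is exactly that observation made explicit, so the approach matches what the paper implicitly relies on. One small inaccuracy in your completeness step: \msg{propose-noop} and \msg{skip} do cause \cmd{no-op} commands to be installed in log slots, so they cannot be dismissed as messages that never trigger a log write; since they are same-view proposals they fold into case (1) and the conclusion is unaffected, but the enumeration should include them.
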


\begin{claim}\label{claim:propose_then_add}
If command $c$ is added to any log in view $v$, then $c$ has been proposed and $c.view \leq v$.
\end{claim}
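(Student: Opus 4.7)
The plan is to prove Claim~\ref{claim:propose_then_add} by strong induction on the total order of log-addition events across all replicas in any execution of \sys. The key supporting ingredients are Fact~\ref{fact:add}, which enumerates the only two ways a command may enter a log, Fact~\ref{fact:unique_proposal} on proposal uniqueness, and the protocol-level invariant that each replica's local \var{view} is monotonically non-decreasing over time.

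For the base case, I would consider the earliest event in the execution at which $c$ appears in any replica's log, and let $v$ be the adder's \var{view} at that moment. Fact~\ref{fact:add} leaves three sub-possibilities: a \msg{propose} from view $v$, a same-view \msg{propose-recover}, or a \msg{propose-recover} from an initiator. The latter two both require the sender to already hold $c$ in its own log, contradicting the choice of this event as the earliest. Hence the addition must be via \msg{propose} in view $v$; by Fact~\ref{fact:unique_proposal} some replica already proposed $c$ in view $v$, establishing that $c$ has been proposed and $c.view = v$.

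For the inductive step, I would fix an addition of $c$ to some log at view $v$ and assume the claim for every strictly earlier addition. I again split along Fact~\ref{fact:add}: case~(a), via \msg{propose} from view $v$, is exactly the base-case argument; case~(b), via a same-view \msg{propose-recover}, forces the sender to have added $c$ to its own log at an earlier event at some view $v' \leq v$ (by monotonicity of \var{view}), and the induction hypothesis yields $c.view \leq v' \leq v$; case~(c), via an initiator's \msg{propose-recover}, is similar: the initiator must itself have added $c$ at some prior event at view $v'' \leq v$, and the induction hypothesis again gives $c.view \leq v'' \leq v$. In every case, $c$ has been proposed, closing the induction.

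The main obstacle I expect is the bookkeeping at the view-change boundary in case~(c): an initiator may transmit a \msg{propose-recover} whose payload was written to its log under an earlier view, and the recipient may be simultaneously bumping \var{view} in response to the exchange. I will therefore need to state precisely that (i) the $v$ appearing in the claim is the adder's \var{view} at the instant the slot is written, and (ii) the initiator already held $c$ at some view no greater than $v$ before the send, which follows from monotonicity of \var{view}. Once that chronological relationship is pinned down, each case chains through the induction hypothesis without gaps and the claim follows.
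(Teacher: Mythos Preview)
Your argument is correct and close in spirit to the paper's, though organized differently. The paper also reduces to the earliest appearance of $c$, but it minimizes over the \emph{view number} rather than over addition events: at that minimal view it observes that case~1 of \autoref{fact:add} gives $c.view=v$ immediately, while case~2 is dispatched via \autoref{fact:initiator_add} (an initiator in view $v$ can only have placed $c$ in its own log in some strictly smaller view, contradicting minimality). Your event-ordered induction reaches the same conclusion by chaining the induction hypothesis back through the sender of the \msg{propose-recover}, which is slightly longer but has the nice side effect that it never needs \autoref{fact:initiator_add} at all. One small simplification you may want to make: in your case~(b) the phrase ``from the same view'' in \autoref{fact:add} already means $c.view=v$ outright, so the detour through $v'\le v$ and the hypothesis is unnecessary there.
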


\begin{proof}
Without loss of generality, assume $v$ is the smallest view that $c$ is added.
Let's discuss the two possible cases in \autoref{fact:add}.

In the first case $c$ is proposed in the current view ($c.view = v$).

In the second case, given \autoref{fact:initiator_add}, $c$ is added before view $v$.
This violates the assumption that $v$ is the smallest view that $c$ is added to a log.
\end{proof}

\begin{claim}\label{claim:unique_initiator}
If a replica $r$ is elected as an initiator in view $v$, then no other replica can be elected as an initiator of $v$.
\end{claim}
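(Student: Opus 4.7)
The plan is to establish \autoref{claim:unique_initiator} via a standard quorum-intersection argument. I would first observe, from the view change sub-protocol in \autoref{sec:protocol:viewchange}, that a replica becomes an initiator of view $v$ only after collecting \msg{view-change-reply} messages from a quorum of size $f+1$ (including its own self-vote) tagged with $v$. Hence if both $r$ and $r'$ were elected initiators of the same view $v$, each would have assembled a set $Q_r, Q_{r'}$ of $f+1$ distinct replicas that each sent a \msg{view-change-reply} with \field{voted-for} equal to $r$ and $r'$ respectively, in view $v$.

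Next, I would show that every replica casts at most one vote per view. The protocol sets \var{voted-for} at the moment a replica votes (either for itself when it becomes a candidate of $v$, or for another candidate upon accepting a \msg{view-change-request} for $v$), and thereafter the text states ``$r''$ ignores any following received \msg{view-change-request} from other candidates for the same view.'' I would also note that a replica only changes \var{voted-for} together with advancing \var{view} to a strictly larger number, so \var{voted-for} cannot be overwritten within the same view $v$. Together, these facts imply that any replica $p$ appearing in $Q_r \cap Q_{r'}$ would have both $\var{voted-for} = r$ and $\var{voted-for} = r'$ at some point while $\var{view} = v$, and this single slot of state cannot hold two different values simultaneously nor flip back and forth within a view.

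Finally, I would close the argument by quorum intersection: since $|Q_r| + |Q_{r'}| = 2(f+1) > 2f+1 = N$, the two quorums must share some replica $p$, leading directly to the contradiction $r = r'$. Hence at most one initiator exists per view.

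The main obstacle I anticipate is not the counting step but the careful bookkeeping of the second paragraph: I need to inspect every place in the protocol where \var{voted-for} or \var{view} is updated (self-nomination, receiving \msg{view-change-request}, reaction to a \msg{propose} of a higher view, retry on timeout) to verify that no replica can legitimately issue two distinct \msg{view-change-reply} messages tagged with the same $v$. Once this invariant---``for each replica and each view $v$, at most one value is ever assigned to \var{voted-for} while \var{view}$=v$''---is nailed down as a lemma, the claim follows immediately.
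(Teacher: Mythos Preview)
Your proposal is correct and follows essentially the same quorum-intersection argument as the paper: both obtain $f{+}1$-sized vote sets for $r$ and $r'$, intersect them to find a replica that voted twice in view $v$, and derive a contradiction from the protocol rule that a replica ignores further \msg{view-change-request}s for the same view once it has set \var{voted-for}. Your plan is somewhat more meticulous in auditing every update to \var{voted-for}, whereas the paper simply cites the single relevant protocol sentence, but the underlying argument is identical.
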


\begin{proof}
If another replica $r'$ is elected as an initiator in $v$, both $r$ and $r'$ have received at least $f+1$ \msg{view-change-vote}s from different replicas.
Due to quorum intersection, there exists at least one replica $r''$ that has sent \msg{view-change-vote} to both $r$ and $r'$.
This violates the protocol where a follower ignores a \msg{view-change-request} from any other candidate in the same view after updating \var{voted-for} before replying \msg{view-change-vote}. 
\end{proof}

\begin{claim}\label{claim:unique_initview}
If a \cmd{init-view} for view $v$ exists in any log, then no other \cmd{init-view} for $v$ exists in any log.
\end{claim}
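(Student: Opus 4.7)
The plan is to argue by contradiction, leveraging the three building blocks already in place: \autoref{claim:propose_then_add} (to trace any logged \cmd{init-view} back to a proposal), the protocol-level rule that an \cmd{init-view} for view $v$ is only ever proposed by an initiator of $v$ at slot \var{view-base} in view $v$, \autoref{claim:unique_initiator} (uniqueness of that initiator), and \autoref{fact:unique_proposal} (one replica never proposes two different commands at the same slot in the same view).

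Concretely, I would suppose for contradiction that two distinct \cmd{init-view} commands $c_1 \neq c_2$ for view $v$ exist in some (possibly different) logs. First I would invoke \autoref{claim:propose_then_add} on each $c_i$: since $c_i$ was added to a log, it was proposed beforehand, and by the view-tagging carried in the \cmd{init-view} (each command's view is fixed at proposal time by its proposer), we have $c_1.\mathit{view} = c_2.\mathit{view} = v$. Next I would use the protocol description in \autoref{sec:protocol:viewchange}: the only code path that produces an \cmd{init-view} for view $v$ is the one executed by a candidate after collecting a quorum of \msg{view-change-vote}s and transitioning to the \emph{initiator} role of view $v$, and it places the \cmd{init-view} at slot \var{view-base} of $v$. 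So both $c_1$ and $c_2$ were proposed by replicas that were elected initiator of view $v$, at the same log slot.

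Applying \autoref{claim:unique_initiator}, these two proposers must be the \emph{same} replica $r$. But then $r$ proposed both $c_1$ and $c_2$ at the same slot in the same view $v$, directly contradicting \autoref{fact:unique_proposal}. Hence $c_1 = c_2$, establishing the claim.

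I expect the only subtlety to be making the step ``an \cmd{init-view} for view $v$ is necessarily proposed in view $v$ by an initiator of $v$'' watertight. This requires explicitly invoking the protocol specification for \cmd{init-view} creation and being careful that logs may receive commands via \msg{propose-recover} across view changes, so the proof must distinguish between when a command is \emph{proposed} (origin) and when it is \emph{added} (possibly later); \autoref{claim:propose_then_add} is exactly what bridges this gap. Once that is nailed down, the remainder is a direct chain of previously established facts.
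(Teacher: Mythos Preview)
Your proposal is correct and follows essentially the same route as the paper: use \autoref{claim:propose_then_add} to trace any logged \cmd{init-view} back to a proposal, invoke the protocol rule that only an initiator of $v$ proposes \cmd{init-view} for $v$, then apply \autoref{claim:unique_initiator} and \autoref{fact:unique_proposal} to conclude uniqueness. Your treatment is in fact slightly more careful than the paper's in making explicit that both commands land at the same slot \var{view-base} and in flagging the distinction between when a command is proposed versus added, but the underlying argument is identical.
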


\begin{proof}
If the target \cmd{init-view} exists in a log, it must have been added to the log and thus must have been proposed (\autoref{claim:propose_then_add}).
In a view, only an initiator proposes a \cmd{init-view}.
Given \autoref{fact:unique_proposal} and \autoref{claim:unique_initiator}, the proposed \cmd{init-vew} is unique.
\end{proof}

To simplify the following illustration, let's introduce more conventions at this point.
If a replica it is elected in the view, we call it \emph{the initiator} of the view and denote it with $v.initiator$.
Correspondingly, we use \emph{the base log} of a $v$ to denote the local log of the initiator when it starts election in $v$, if the initiator exists.
Also, we use $v.init$ to denote the unique \cmd{init-view} of $v$ if it exists.

\begin{claim}\label{claim:view_unique_proposal}
If command $c$ is proposed for slot $s$ in view $v$, then no other command can be proposed for $s$ in $v$.  
\end{claim}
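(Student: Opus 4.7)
The plan is to reduce this claim to \autoref{fact:unique_proposal} by showing that for every slot $s$ in a given view $v$, there is at most one replica permitted by the protocol to propose at $s$ in $v$. The argument naturally splits along the two kinds of slots that appear in a view.

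First I would handle $s = \var{view-base}$ of view $v$. Only an initiator of $v$ is allowed to propose at this slot, and the only command proposed there is $v.init$. By \autoref{claim:unique_initiator}, at most one replica ever becomes the initiator of $v$, so any two proposals at $\var{view-base}$ in $v$ come from the same replica; \autoref{fact:unique_proposal} then forbids two distinct commands. Second, for $s > \var{view-base}$, I would argue that the slot-assignment scheme in effect during $v$ deterministically maps $s$ to a single replica $r_s$: either the default round-robin scheme if $v$ is the initial view, or the scheme dictated by the committed $v.init$ otherwise. By \autoref{claim:unique_initview} this scheme is itself unique, so $r_s$ is well-defined. Hence any command proposed at $s$ in $v$ comes from $r_s$, and \autoref{fact:unique_proposal} again closes the case.

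The one subtlety I would need to spell out is that every protocol path through which a command can be proposed at $s$ respects this ownership assignment. I would enumerate them: the normal pulsing-mode \msg{propose}, the responsive-mode \msg{propose}, the piggybacked \msg{propose-noop} after a recovery, the block of $\cmd{no-op}$s used in the responsive mode to close gaps up to a \msg{propose-nack}, and the \msg{skip} optimization, all are explicitly restricted to a replica's own assigned slots. The \msg{propose-recover} path, although it re-transmits payload for a slot the sender may not own, merely re-broadcasts an existing proposal and hence does not constitute a new proposal for $s$ in $v$, so it contributes no fresh command.

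The hard part will be cleanly justifying that the slot assignment scheme of $v$ is unambiguous at the moment any replica proposes in $v$. Because the new scheme takes effect only after $v.init$ has been locally committed, different replicas may not yet agree on ownership of high-indexed slots during $v$; however, any proposing replica has by then committed $v.init$ and is therefore using the single scheme encoded in the unique $v.init$ (\autoref{claim:unique_initview}). Once this is stated carefully, the case analysis above collapses to two direct applications of \autoref{fact:unique_proposal}.
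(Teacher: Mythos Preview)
Your proposal is correct and follows the same core idea as the paper: the slot-assignment scheme in a view is uniquely determined (via \autoref{claim:unique_initview}), so each slot has a single owner, and \autoref{fact:unique_proposal} finishes. The paper's own proof is a two-sentence sketch that invokes exactly \autoref{claim:unique_initview} and \autoref{fact:unique_proposal}; your version is considerably more explicit---separating the $\var{view-base}$ slot (handled via \autoref{claim:unique_initiator}) from the remaining slots, enumerating every protocol path that can emit a proposal, and addressing the timing of when $v.init$ takes effect---but none of this changes the underlying argument.
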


\begin{proof}
Replicas propose in $v$ following $v.init$.
Given \autoref{claim:unique_initview} and \autoref{fact:unique_proposal}, we know that every proposal is unique.
\end{proof}

\begin{claim}\label{claim:initial_append}
If command $c$ is appended at slot $s$ following $c'$, then $c$ is appended at slot $s$ following $c'$ in $v = c.cmd$ for some log.
\end{claim}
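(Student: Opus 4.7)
The plan is a first-occurrence argument over chronology. Let $t^{*}$ be the earliest moment at which any replica's log attains the state ``$c$ is at slot $s$ as an appended command with $c'$ at slot $s-1$'', and let $L^{*}$ be that log and $v^{*}$ the view of its owner at $t^{*}$. Claim~\ref{claim:propose_then_add} gives $c.view \leq v^{*}$, so the whole task reduces to ruling out $c.view < v^{*}$: once that is done, $L^{*}$ at $t^{*}$ already witnesses the appending in view $c.view$, and the ``for some log'' quantifier of the statement is discharged.

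At $t^{*}$ the transition into the target state is triggered by one log modification. The principal sub-situation is that the modification adds $c$ itself at slot $s$ of $L^{*}$, with $c'$ already in place at $s-1$ and the contiguous prefix already reaching $s-1$; a secondary sub-situation is that a lower empty slot gets filled and makes the prefix grow past $s$, in which case both $c$ and $c'$ were in $L^{*}$ earlier and the argument reduces to the principal one applied to the earliest time $c$ landed at slot $s$ of $L^{*}$. Either way, I invoke Fact~\ref{fact:add} on the event that first placed $c$ at slot $s$ in $L^{*}$. In subcase~(1) the carrier is a \msg{propose} or \msg{propose-recover} from the current view $v^{*}$, which by the proposal semantics forces $c.view = v^{*}$ and closes this branch.

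Subcase~(2), where $c$ is delivered via a \msg{propose-recover} from the initiator of view $v^{*}$, is the main obstacle. By the guard that \msg{propose-recover} is never emitted for slots greater than \var{last-ack}, together with the invariant $\var{last-ack} \leq \var{last-append}$ implicit in \autoref{fig:replica_state}, slot $s$ lies inside the initiator's contiguous prefix at some time $t' < t^{*}$, so slot $s-1$ of the initiator's log is already occupied at $t'$. I plan to close the argument by recursion: apply Fact~\ref{fact:add} to the event that placed $c$ at slot $s$ in the initiator's log (in some view $v'' \leq v^{*}$) and iterate the same case split, using the observation that the first-ever insertion of $c$ into any log cannot be a Case~(2) event (no prior holder of $c$ exists) and must therefore be a Case~(1) event tagged with view $c.view$; this grounds the recursion. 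The delicate accounting is to track that the predecessor at slot $s-1$ remains exactly $c'$ along the chain, for which I intend to use the ``at least as up-to-date'' clause of the view-change protocol, which is precisely what guarantees that each successive initiator inherits the relevant log prefix from its voters. I expect that this $s-1$ consistency bookkeeping, rather than the outer induction, will be the main technical burden in turning the sketch into a formal proof.
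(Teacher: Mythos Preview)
Your overall strategy---pick a minimal witness, case-split on Fact~\ref{fact:add}, and in subcase~(2) push the witness back to the initiator---matches the paper's. Two execution choices diverge and make your route harder than it needs to be.

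First, the paper takes minimality over the \emph{view} in which the appending occurs, not over wall-clock time. With view-minimality, subcase~(2) closes in a single step: by Fact~\ref{fact:initiator_add} an initiator in view $v'$ adds only view-$v'$ commands, so if $c.view<v'$ the initiator's log already held $c$ appended following $c'$ in some strictly smaller view---an immediate contradiction to minimality of $v'$. Your planned explicit recursion ``down the chain of initiators'' is therefore unnecessary once the well-ordering is chosen this way. View-minimality also quietly absorbs your secondary sub-situation: the view-change protocol clears every slot beyond \var{last-append} when a replica changes view, so a command sitting above a gap cannot persist into a later view; hence the event that adds $c$ at $s$ and the event that extends the prefix to $s$ lie in the same view, and analysing the former suffices.

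Second, the ``at least as up-to-date'' election clause is the wrong lever for the $s{-}1$ bookkeeping. That clause compares log \emph{tips}, not per-slot contents; extracting ``the initiator's slot $s{-}1$ holds $c'$'' from it would drag in Log-Matching-style reasoning (Claim~\ref{claim:log_matching}) that the paper establishes only \emph{after} the present claim, so you would be arguing in a circle. The paper instead appeals to the follower-side consistency check on \msg{propose-recover} during view change: a follower advances \var{last-ack} through a slot only after confirming its local entry there agrees (in attached view) with what the initiator sent, so by the time the follower appends $c$ at $s$ its slot $s{-}1$---namely $c'$---has already been matched against the initiator's. That one-line protocol appeal replaces the entire bookkeeping burden you anticipated.
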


\begin{proof}
Without loss of generality, let's discuss the smallest view $v'$ where this appending happens.
Given \autoref{claim:propose_then_add}, we have $v' \ge v$.
Let's discuss the two cases in \autoref{fact:add} separately.

For the first case, the view is $v$ ($v' = v$) when the appending happens.

For the second case, the checking logic of the follower before appending ensures that the append only happens if the command in the previous slot is consistent with the initiator's log.
If the claim doesn't hold, then we know $c$ is not appended after $c'$ to the initiator's log in any view $v'' < v'$.
Also, we know that $c$ is not appended to $c'$ in $v'$ given \autoref{fact:initiator_add}.
So, the initiator's log doesn't contain $c$ at $s$ following $c'$.
This leads to a contradiction.

With this serving as an induction step, we can know that if the targeted append doesn't happen in view $v$, it cannot happen in any view bigger than $v$.
\end{proof}

\begin{claim}\label{claim:monolithic_view}
For any log, if $s < s' \leq \var{last-append}$, then $s.cmd.view \leq s'.cmd.view$.
\end{claim}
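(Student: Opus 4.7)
The plan is to reduce the statement to its adjacent-slot version and then invoke Claim \ref{claim:initial_append} together with Claim \ref{claim:propose_then_add}. The intuition is that every ``new'' adjacency between an appended command and its predecessor is first created by some protocol step at a specific replica, and that replica's view at the moment of creation upper-bounds the predecessor's view while being at most the appended command's view.

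First I would argue that it suffices to prove the case $s' = s+1$: once we have $(t-1).cmd.view \leq t.cmd.view$ for every $0 < t \leq \var{last-append}$, then for any $s < s' \leq \var{last-append}$ the telescoping chain $s.cmd.view \leq (s+1).cmd.view \leq \cdots \leq s'.cmd.view$ delivers the general inequality, since every intermediate index also lies in the appended prefix.

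Next I would fix any log $L$ and any slot $s$ with $0 < s \leq \var{last-append}$, write $c = s.cmd$ and $c' = (s-1).cmd$, and apply Claim \ref{claim:initial_append} to the pair $(c, c')$. This produces some log $L_0$ in which $c$ was first appended at slot $s$ following $c'$ in view $v = c.view$. At that moment, $L_0$'s owning replica is in view $v$ and already has $c'$ at slot $s-1$; hence $c'$ must have been added to $L_0$ by some strictly earlier step, while that replica was in some view $V' \leq v$, because a replica's local \var{view} is monotonically non-decreasing. Claim \ref{claim:propose_then_add} then gives $c'.view \leq V' \leq v = c.view$. Because $c'.view$ is an intrinsic attribute of the command $c'$ itself, the bound established in $L_0$ transfers verbatim to the original log $L$, closing the adjacent case and hence the claim.

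The step I expect to be the main obstacle is the implicit use of per-replica view monotonicity, i.e., that a replica never rolls its local \var{view} back. The excerpt does not record this as a standalone fact, though it follows by inspection of the view-change sub-protocol, where every transition strictly increments \var{view}. Were a more rigorous treatment demanded, I would promote this observation to a small preliminary fact and cite it explicitly in place of the parenthetical remark above; with that in hand, the argument above goes through without further machinery.
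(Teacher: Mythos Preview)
Your proof is correct and follows essentially the same route as the paper's: both invoke Claim~\ref{claim:initial_append} to locate a log in which the later command is appended in its own view, then use per-replica view monotonicity together with Claim~\ref{claim:propose_then_add} to bound the predecessor's view. Your explicit reduction to the adjacent-slot case is in fact slightly more careful than the paper, which applies Claim~\ref{claim:initial_append} directly to arbitrary $s<s'$ without spelling out the telescoping step.
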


\begin{proof}
From \autoref{claim:initial_append}, it is known that $s'.cmd$ is appended after $s.cmd$ in $s'.cmd.view$ for some log.
Assume that $s.cmd.view = v$ and $s'.cmd.view = v'$.
Assume that for this log, $s.cmd$ is appended in view $v''$.
From \autoref{claim:propose_then_add} we have $v \leq v''$.
The fact that $s.cmd$ exists when $s.cmd'$ is appended denotes that $v'' \leq v'$.
So, we have $v \leq v'$.
\end{proof}

\autoref{claim:monolithic_view} also implies the following statement:

\begin{claim}\label{claim:log_view_consecutive}
In a log, if $s < s' \leq \var{append-slot}$ and $s.cmd.view = s'.cmd.view = v$, then for any $s''$ that satisfies $s < s'' < s'$, $s''.cmd.view = v$.
In other words, the commands with the same view are consecutive in a log.    
\end{claim}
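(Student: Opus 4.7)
The plan is to derive this claim as an immediate corollary of \autoref{claim:monolithic_view}, which establishes that the view numbers of appended commands are monotonically non-decreasing along the slot index. Once monotonicity is in hand, ``the commands with a given view form a consecutive block'' is just a squeeze argument, so there should be no real obstacle here — the content lies in the earlier claims.

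Concretely, I would fix a log and slots $s < s'' < s' \leq \var{last-append}$ with $s.cmd.view = s'.cmd.view = v$. First I would apply \autoref{claim:monolithic_view} to the pair $(s, s'')$, obtaining $s.cmd.view \leq s''.cmd.view$, i.e., $v \leq s''.cmd.view$. Then I would apply \autoref{claim:monolithic_view} to the pair $(s'', s')$, obtaining $s''.cmd.view \leq s'.cmd.view = v$. Combining the two inequalities yields $s''.cmd.view = v$, which is exactly the claim.

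One small sanity check I would include: note that the hypothesis $s'' < s' \leq \var{last-append}$ ensures that $s''$ is itself an appended slot (not empty), so $s''.cmd$ and $s''.cmd.view$ are well defined and \autoref{claim:monolithic_view} legitimately applies. Since \autoref{claim:monolithic_view} has already been proved in the excerpt, no further machinery (view change correctness, proposal uniqueness, initiator uniqueness, etc.) needs to be invoked. The hard part of this stretch of the argument was really establishing monotonicity in \autoref{claim:monolithic_view}; once that is available, the consecutiveness statement is essentially free, and I would keep the write-up to a few lines rather than introducing any new induction or case analysis.
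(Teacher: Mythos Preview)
Your proposal is correct and matches the paper's approach exactly: the paper states this claim immediately after \autoref{claim:monolithic_view} with the remark that it ``also implies the following statement,'' offering no further proof. Your squeeze argument ($v \leq s''.cmd.view \leq v$) is precisely the intended derivation, and your sanity check that $s''$ is an appended slot is a welcome addition that the paper leaves implicit.
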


\begin{claim}\label{claim:initview_first}
For any log, if slot $s$ is the smallest appended slot in view $v$ that satisfies $s.cmd.view = v$, then $s.cmd$ is $v.init$.
\end{claim}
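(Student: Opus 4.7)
The plan is to show that $s = b$, where $b$ is the slot at which the (unique, by \autoref{claim:unique_initiator}) initiator of view $v$ proposed $v.init$; once this is established, \autoref{claim:view_unique_proposal} immediately gives $s.cmd = v.init$. The argument bounds $s$ from both sides.

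For $s \ge b$: the initiator proposes $v.init$ at slot $b = \var{view-base}$, and per the protocol every other replica may propose in view $v$ only after locally committing $v.init$, at which point its \var{next-propose} has been advanced to an assigned slot strictly greater than $b$. Thus every command ever proposed in view $v$ lives at a slot $\ge b$, and combining this with \autoref{claim:propose_then_add} yields $s.cmd.view = v \Rightarrow s \ge b$.

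For $s \le b$: slot $b$ also lies in the contiguous prefix of $L$ (since $b \le s$ and $s$ does), so \autoref{claim:monolithic_view} gives $b.cmd.view \le v$; I must upgrade this to equality, as minimality of $s$ then forces $b = s$. I would invoke \autoref{claim:initial_append} to reduce to the moment at which $s.cmd$ first becomes appended in any log, and then induct on the order in which view-$v$ commands first become appended across replicas. The base case is clean: the earliest view-$v$ append anywhere is the initiator's own append of $v.init$ at slot $b$. The inductive step would use the view-change clearing rule --- on receiving the \msg{propose} carrying $v.init$, a follower clears all proposals from slot \var{view-base} onwards and installs $v.init$ at \var{view-base} --- together with the proposal recovery protocol to maintain the invariant that whenever a log's contiguous prefix contains a view-$v$ command at some slot $\ge b$, slot $b$ in that log holds $v.init$.

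The main obstacle is this inductive step. The delicate case is a follower $r_L$ that enters view $v$ via a non-$v.init$ \msg{propose} at some slot $s' > b$ while $r_L$'s pre-existing contiguous prefix already reaches past $s'$ with commands from earlier views. I must argue that $r_L$ cannot end up with a view-$v$ command appended at $s'$ while a view-$(<v)$ command still sits at slot $b$ --- either the protocol forces a \msg{propose-recover} exchange that installs $v.init$ at $b$ first, or the contiguous prefix cannot in fact extend past $b$ until the clearing-and-installation at $b$ has taken effect. Carefully serializing these protocol events and ruling out the race in which a view-$v$ command joins the contiguous prefix while slot $b$ still carries an older command is where most of the proof's care must go.
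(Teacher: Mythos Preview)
Your approach is valid in outline but far more elaborate than the paper's. The paper's proof is two sentences: it asserts directly from the protocol that (i) a replica appends non-\cmd{init-view} commands from view $v$ only after $v.init$ has already been appended to its log, and (ii) a replica never removes a command from the current view; together these immediately give that the smallest appended view-$v$ slot holds $v.init$. There is no slot-bounding via $b$, no appeal to \autoref{claim:view_unique_proposal}, and no induction. Your $s \le b$ direction is essentially an attempt to re-derive invariant (i) from lower-level message-handling steps, which is precisely why you run into the race between a non-init \msg{propose} and the \msg{propose} carrying $v.init$. The paper sidesteps that entirely by working one abstraction level up, treating (i) and (ii) as direct consequences of the view-change design (followers clear from \var{view-base} and install $v.init$ before appending anything else in view $v$; proposing in $v$ is gated on committing $v.init$). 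What your route buys is a more operational justification of (i); what it costs is exactly the delicate serialization you flag as the main obstacle.
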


\begin{proof}
From the protocol, we can see that a replica only appends other commands from the current view after the current view's \cmd{view-init} has been appended.
Also, a replica never removes any command from the current view.
So, $s.cmd = v.init$.
\end{proof}

\begin{claim}\label{claim:slot_view_unique_append}
If an appended command $c$ exists at slot $s$ in some log, then for any log, if an appended command $c'$ exists at slot $s$ and $c.view = c'.view$, then $c = c'$.
Also, $c$ and $c'$ are appended following the same command.
\end{claim}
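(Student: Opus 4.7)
My plan is to decompose the claim into two parts --- the command equality $c = c'$ and the matching-predecessor conclusion --- and address each in turn. For the equality, by the convention that a command's view equals the view in which it was proposed, $c.view = c'.view = v$ implies both commands were proposed in view $v$ at slot $s$. Since \autoref{claim:view_unique_proposal} allows at most one command to be proposed at slot $s$ in view $v$, we immediately get $c = c'$.

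For the matching-predecessor part, I would induct on $s$. The base case $s = 1$ is immediate from the convention that slot $0$ holds the unique null command in every log. For the inductive step, let $c_{\text{prev}}$ and $c'_{\text{prev}}$ denote the commands at slot $s-1$ in the respective logs at the moment of appending. I split on whether $c$ equals $v.init$. If $c \neq v.init$, then by \autoref{claim:initview_first}, $s$ is not the first view-$v$ slot in either log, so \autoref{claim:log_view_consecutive} forces $c_{\text{prev}}.view = c'_{\text{prev}}.view = v$; applying the inductive hypothesis at slot $s-1$ then gives $c_{\text{prev}} = c'_{\text{prev}}$.

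The harder subcase is $c = v.init$. Here \autoref{claim:unique_initiator} and \autoref{claim:unique_initview} pin down a unique initiator $r_v$ and a unique $v.init$, and therefore a unique first-slot $s_v$ and a canonical command $c^{\star}$ at slot $s_v - 1$ coming from $r_v$'s base log. I would then leverage the view-change catch-up machinery: a follower that receives \msg{propose} for $v.init$ subsequently runs \msg{propose-nack}/\msg{propose-recover} over the slots between its \var{last-ack} and $s_v$, where the view-consistency check either accepts a local command whose view matches (which the inductive hypothesis at $s_v-1$ forces to coincide with $c^{\star}$) or discards the local command in favour of the one recovered from the initiator.

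The main obstacle lies precisely in this $c = v.init$ subcase, since a follower can briefly hold a predecessor at $s_v - 1$ that disagrees with $c^{\star}$ during the window between delivering $v.init$ and completing catch-up. To make the claim work cleanly, I plan to interpret ``appended following the same command'' relative to the settled state reached after catch-up completes, and to exploit the fact that \var{last-ack} cannot cross $s_v$ until that catch-up finishes, so no subsequent appended command depends on an inconsistent predecessor. Combining this stabilization argument with the inductive hypothesis at slot $s_v - 1$ should close the remaining gap.
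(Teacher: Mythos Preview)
Your first part (command equality) is cleaner than the paper's: the paper detours through \autoref{claim:initial_append} to say ``both appendings happen in $v$'' before invoking \autoref{claim:view_unique_proposal}, whereas you observe directly that $c.view=v$ means $c$ was proposed in $v$, which suffices.

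For the predecessor part, however, your plan diverges from the paper in a way that leaves a gap. The paper does not induct on $s$; it case-splits on the predecessor's view. If $c_p.view=v$, the already-proven first part (applied at slot $s-1$) gives uniqueness immediately---your ``inductive hypothesis'' is doing no inductive work here, since part (a) holds at every slot outright. If $c_p.view\neq v$, the paper uses \autoref{claim:log_view_consecutive} and \autoref{claim:initview_first} to conclude $c=v.init$, and then invokes \autoref{claim:initial_append} together with the follower check to pin $c_p$ to the initiator's entry. The key move you are missing is this appeal to \autoref{claim:initial_append}: it reduces \emph{any} appending of $v.init$---including one that happens in a later view $v'>v$ via \msg{propose-recover} from the $v'$-initiator's log---back to an appending (with the same predecessor) in view $v$ itself. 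Your catch-up argument only treats the case ``a follower receives \msg{propose} for $v.init$,'' i.e.\ the in-view-$v$ path, and says nothing about a replica that first sees $v.init$ while catching up in some later view. That case is real, and without the reduction given by \autoref{claim:initial_append} your induction on $s$ does not cover it.

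Your transient-state worry is legitimate, but note that the paper absorbs exactly this issue into \autoref{claim:initial_append}: the proof of that claim asserts that the follower check precedes the append, so by the time the paper reaches the present claim it can simply cite ``the check of followers when appending $v.init$ in $v$.'' Your proposed fix---reinterpreting ``appended following'' as the settled post-catch-up state---is effectively the same assumption, just made at this layer instead of upstream. Either way you need that invariant; the paper's route isolates it once in \autoref{claim:initial_append} and then reuses it, which is why the present proof is three lines rather than an induction.
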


\begin{proof}
Assume $c.view = c'view = v$.
According to \autoref{claim:initial_append}, both appendings happen in $v$.
According to \autoref{claim:view_unique_proposal}, $c = c'$. 

Assume $c$ is appended following $c_p$.
If $c_p.view = v$, then according to the analysis above, it is unique.
Otherwise, because of \autoref{claim:log_view_consecutive} and \autoref{claim:initview_first}, $c = v.init$.
According to the protocol, the check of followers when appending $v.init$ in $v$ ensures that $c_p$ is unique and is consistent with the initiator's log.
Given \autoref{claim:initial_append}, $c_p$ is unique.
\end{proof}

\begin{claim}\label{claim:log_matching}
\emph{Log Matching Property:} If two logs contain an identical appended command with the same slot $s$ and view $v$, then the two logs are identical up to slot $s$. 
\end{claim}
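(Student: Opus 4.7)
The plan is to prove the Log Matching Property by strong induction on the slot index $s$. Let $L_1, L_2$ be the two logs, both containing an identical appended command $c$ at slot $s$ with $c.\text{view} = v$. The goal is to show $L_1$ and $L_2$ agree at every slot $0, 1, \ldots, s$.

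The base case is $s = 1$: both logs contain $c$ at slot 1 by hypothesis, and both contain the $null$ command at slot 0 by the convention fixed at the start of the appendix, so they agree up to slot 1. For the inductive step with $s \geq 2$, I would invoke \autoref{claim:slot_view_unique_append}, which yields two facts: the command at slot $s$ is the same in both logs (already known by hypothesis) and, crucially, both logs append $c$ at slot $s$ \emph{following the same command} $c_{s-1}$. Hence the command at slot $s-1$ in each of $L_1$ and $L_2$ is the same $c_{s-1}$, and in particular has the same view.

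To close the recursion I then need that $c_{s-1}$ is itself an appended command at slot $s-1$ in both logs, so that the induction hypothesis is applicable. This follows from the bookkeeping of \var{last-append}: since slot $s$ holds a command in each log, we have $s \leq \var{last-append}$ in both, hence $s - 1 \leq \var{last-append}$ in both, so $c_{s-1}$ at slot $s-1$ is appended in both. Applying the induction hypothesis at slot $s-1$ with the common command $c_{s-1}$ of common view $c_{s-1}.\text{view}$ shows $L_1$ and $L_2$ agree up to slot $s-1$; combined with their agreement at slot $s$ this completes the inductive step.

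The main obstacle is the assertion that the predecessor is literally identical in the two logs, not merely that two possibly distinct commands of the same view happen to sit at slot $s-1$. This is exactly the strengthened ``appended following the same command'' clause of \autoref{claim:slot_view_unique_append}, which I would cite as a black box; without that clause the induction would not reach back past slot $s$. A secondary subtlety, easily handled, is the edge case $s = 1$, where the predecessor is the designated $null$ entry at slot $0$ rather than a genuine appended command, and thus must be treated via the base case rather than the recursive step.
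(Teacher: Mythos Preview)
Your proof is correct and is essentially the same argument as the paper's: both hinge on the ``appended following the same command'' clause of \autoref{claim:slot_view_unique_append} to propagate agreement backward one slot at a time. The only difference is presentational---you phrase it as forward induction on $s$, while the paper phrases it as a contradiction by taking the largest slot $s' < s$ at which the two logs differ and observing that the identical command at $s'+1$ would then have two distinct predecessors, violating \autoref{claim:slot_view_unique_append}.
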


\begin{proof}
If the claim doesn't hold, without loss of generality, assume $s'$ is the biggest slot that violates the claim.
In other words, $s' \leq s$ and the commands are not identical.
Then the commands at $s' + 1$ are identical ($c$) and are appended.
When looking at $c$, \autoref{claim:slot_view_unique_append} is violated because it is appended to two different commands.
\end{proof}

\begin{claim}\label{claim:log_inclusion_lemma}
Assume the last appended slot of $l$ and $l'$ are $s$ and $s'$ respectively, if $s \ge s'$ and $s.view = s'.view = v$, then the appended commands of the two logs ($c$ and $c'$) at $s'$ are identical. 
\end{claim}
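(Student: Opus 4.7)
The plan is to show that slot $s'$ carries view $v$ in both logs and then invoke \autoref{claim:slot_view_unique_append} to obtain $c = c'$. Both commands are appended: $c'$ sits at the last appended slot of $l'$, while $c$ sits at slot $s' \le s$, inside the appended prefix of $l$. The hypothesis directly gives $c'.view = v$, so the nontrivial work is showing that $c.view = v$ as well.

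The first step would be to locate a common anchor slot in both logs. By \autoref{claim:unique_initiator} and \autoref{claim:unique_initview}, view $v$ has a unique initiator and a unique $v.init$ command; since the initiator proposes $v.init$ exactly once at its view-base, $v.init$ can only reside at one absolute slot $s_v$ in any log that contains it. Both $l$ and $l'$ contain at least one appended view-$v$ slot (namely $s$ and $s'$), so \autoref{claim:initview_first} forces the smallest view-$v$ slot in each log to hold $v.init$, and thus to equal $s_v$. In particular, $s_v \le s' \le s$.

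The second step would be to propagate view $v$ forward within $l$. Slots $s_v$ and $s$ in $l$ are both appended and carry view $v$, so \autoref{claim:log_view_consecutive} forces every intervening slot, including $s'$, to carry view $v$ in $l$. Thus $c.view = v = c'.view$, and since both are appended commands at slot $s'$ with the same view, \autoref{claim:slot_view_unique_append} yields $c = c'$.

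The main obstacle is the second step: upgrading the weak bound $c.view \le v$ that \autoref{claim:monolithic_view} supplies on its own to an equality. The upgrade cannot be done with monolithic-view alone; it requires stitching together the uniqueness of $v$'s initiator, the fixed absolute position of $v.init$ in any log, and the contiguity of view blocks established by \autoref{claim:log_view_consecutive}.
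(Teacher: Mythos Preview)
Your proof is correct and follows essentially the same route as the paper: both arguments hinge on locating $v.init$ in each log via \autoref{claim:initview_first}, exploiting the contiguity of view-$v$ slots (\autoref{claim:log_view_consecutive}), and finishing with \autoref{claim:slot_view_unique_append}. The paper frames the middle step as a contradiction (if $c.view \neq v$ then $v.init$ would sit at a slot $> s'$ in $l$ but $\le s'$ in $l'$), while you argue the contrapositive directly by pinning $v.init$ to a single absolute slot $s_v \le s'$ and sandwiching $s'$ between $s_v$ and $s$ in $l$; the ingredients and logic are the same.
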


\begin{proof}
If $c.view = c'.view$, according to \autoref{claim:slot_view_unique_append}, we have $c = c'$.
Otherwise, given \autoref{claim:log_view_consecutive} and \autoref{claim:initview_first}, we know that for $l$, $v.init$ is in a slot bigger than $s'$, while for $l'$, $v.init$ is in a slot no bigger than $s'$.
This violates \autoref{claim:slot_view_unique_append}.
\end{proof}

\begin{claim}\label{claim:log_inclusion}
\emph{Log Inclusion Property:} Assume the last appended slot of $l$ and $l'$ are $s$ and $s'$ respectively, if $s \ge s'$ and $s.view = s'.view = v$, then the commands at a slot no bigger than $s'$ are identical in the two logs for any $s'' \leq s'$.
In other words, $l$ includes all appended commands in $l'$. 
\end{claim}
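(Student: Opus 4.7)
The plan is to obtain the Log Inclusion Property as an almost immediate consequence of the two claims that precede it: \autoref{claim:log_inclusion_lemma} pins down the command at the boundary slot $s'$, and \autoref{claim:log_matching} then propagates that equality backwards through every smaller slot. No fresh reasoning about proposals, view changes, or quorum intersection should be needed; the work has already been done in establishing those two lemmas.

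First I would invoke \autoref{claim:log_inclusion_lemma} on $l$ and $l'$ with the given $s$, $s'$, and $v$: since $s \geq s'$ and both last-appended slots carry view $v$, the lemma yields that the appended commands at slot $s'$ in $l$ and $l'$ coincide; call this common command $c'$. Next I would check that $c'$ is an appended command in each log: in $l'$ this is given by hypothesis, and in $l$ it follows from $s' \leq s$ together with the fact that every slot up to the last-appended slot is, by definition, appended. With these observations in hand, \autoref{claim:log_matching} applies directly and yields that $l$ and $l'$ agree at every slot $s'' \leq s'$, which is exactly the statement of the claim.

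The main obstacle, modest as it is, lies in making sure the hypotheses of \autoref{claim:log_matching} are literally satisfied: that claim demands an \emph{identical appended command} at a shared slot, and one must articulate why the command $c'$ produced by \autoref{claim:log_inclusion_lemma} is genuinely appended in $l$ (not merely written and then possibly trimmed later) and why its attached view in the two logs is the same. Both facts follow from $c'$ being the very same command appearing in both logs, but they are worth stating explicitly so that the invocation of \autoref{claim:log_matching} is unambiguous. Beyond that, the argument is essentially a two-line composition and requires no further case analysis.
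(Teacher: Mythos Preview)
Your proposal is correct and follows essentially the same two-step route as the paper: first apply \autoref{claim:log_inclusion_lemma} to conclude the commands at slot $s'$ agree, then apply \autoref{claim:log_matching} to propagate that agreement back through all slots up to $s'$. The paper's proof is terser and omits the hypothesis-checking you spell out, but the argument is identical in substance.
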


\begin{proof}
According to \autoref{claim:log_inclusion_lemma}, we know that the appended commands at $s'$ are identical.
Given \autoref{claim:log_matching}, we know that the two logs are identical up to slot $s'$. 
\end{proof}

\begin{claim}\label{claim:initiator_completeness_lemma}
If $c$ is committed in view $v$ and $c.view = v$, then $c$ is included in the log of the initiator of any view $v'$ that satisfies $v' > v$.
\end{claim}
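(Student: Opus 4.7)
The plan is to proceed by strong induction on $v'$ over the views strictly greater than $v$. In both the base case $v' = v+1$ and the inductive step, I would combine a quorum-intersection argument with the ``at least as up-to-date'' check that governs voting, in order to propagate $c$ from the replicas that acknowledged it into the base log of the new initiator.

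I would start by identifying the two relevant quorums. Because $c$ is committed in view $v$ at some slot $s_c$ and acknowledgments are append-only, there is a set $Q_{\var{ack}}$ of at least $f+1$ replicas whose \var{last-ack} crossed $s_c$, and each therefore held $c$ appended at $s_c$ with attached view $v$. The initiator $r'$ of view $v'$ previously collected a set $Q_{\var{vote}}$ of at least $f+1$ \msg{view-change-vote}s, so by quorum intersection some replica $r^*$ lies in both sets. The critical sub-step is to show that $c$ still sits at $s_c$ in $r^*$'s log at the moment it voted for $r'$. A protocol action can remove an appended entry only through a \msg{propose-recover} whose attached view disagrees with the local one, or through the clearing that accompanies committing a $v''.init$ for some intermediate view $v''$ with $v < v'' < v'$. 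For every such $v''$, the inductive hypothesis places $c$ at $s_c$ with view $v$ in that initiator's base log; this forces any \msg{propose-recover} it issues for $s_c$ to carry view $v$ (so no clear fires) and pushes its view-base strictly above $s_c$ (so the view-init commit clears only slots beyond $s_c$). Vote-time clearing itself only discards non-appended slots. Hence $c$ persists in $r^*$'s log up to its vote for $r'$.

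Then I would apply the log comparison to conclude. By \autoref{claim:monolithic_view}, $r^*$'s log at vote time has last-append view $v^* \geq v$. The voting rule demands that $r'$'s base log $L'$ be at least as up-to-date, so its last-append view $v'^*$ satisfies $v'^* \geq v$. If $v'^* = v$, then $v^* = v$ as well and $L'$'s last-append slot dominates $r^*$'s, so \autoref{claim:log_inclusion} directly copies $c$ at $s_c$ into $L'$. If instead $v'^* > v$, then $v < v'^* < v'$ and the inductive hypothesis applies to the initiator of $v'^*$; since $L'$ could only have reached view $v'^*$ by catching up against that initiator's log through \msg{propose-recover} exchanges, $L'$ agrees with it on every slot below $v'^*$'s view-base, and in particular holds $c$ at $s_c$. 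The main obstacle I expect is the persistence step in the second paragraph: carefully ruling out every protocol action that could silently evict $c$ from $r^*$ between its acknowledgment and its vote is what makes the induction go through, and it is exactly why \autoref{claim:log_inclusion} on its own cannot be used when $v'^*$ has jumped past $v$.
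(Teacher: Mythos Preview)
Your proposal is correct and follows essentially the same route as the paper: strong induction on $v'$, quorum intersection to locate a voter $r^*$ that still holds $c$ (using the inductive hypothesis on intermediate views for the persistence step), and then a case split on the last-append view of the new initiator's base log, handled by \autoref{claim:log_inclusion} in the equal-view case and by the inductive hypothesis plus log agreement in the higher-view case. The only cosmetic difference is that the paper splits on $c_1.view = c_2.view$ versus $c_1.view < c_2.view$ and, in the second case, invokes \autoref{claim:initview_first} and \autoref{claim:log_matching} explicitly to pin $v'^*.\textsc{init}$ inside $L'$ and match it against the initiator's base log, whereas you phrase that same conclusion operationally as ``catching up via \msg{propose-recover} exchanges.''
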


\begin{proof}
If the claim doesn't hold, without loss of generality, assume $v'$ is the smallest view that violates the claim.
Since $c$ is committed in view $v$, so at least $f+1$ replicas append $c$ in $v$ (given \autoref{claim:propose_then_add}, $c$ can not be appended in a view smaller than $v$).
Also, at least $f+1$ replicas vote for $v'.initiator$ in $v'$.
So, there exists one replica $r$ that both appends $c$ in $v$ and votes for $v'.initiator$ in $v'$.
Also, according to the protocol, a replica only removes a command when it is inconsistent with the initiator's log,
and the leaders for views between $v$ and $v'$ all include $c$ in the log (since $v'$ is the smallest view that doesn't satisfy the claim).
So, $c$ is still included in the log of $r$ when it votes for $v'.initiator$.

Assume the last appended command in the log of $r$ is $c_1$ when it votes for $v'.initiator$.
Also, assume the last appended command in the base log of $v'$ is $c_2$.
According to the protocol, there are two possible cases.

In the first case, $c_1.view = c_2.view$ and the base log of $v'$ has a bigger \var{append-slot}.
In this case, we know that $c$ is included in the base log according to \autoref{claim:log_inclusion}.

In the second case, $c_1.view < c_2.view$.
Let's denote $c_2.view$ as $v''$, so $v'' < v'$.
According to \autoref{claim:monolithic_view}, $v = c.view \leq c_1.view$.
So, $v \leq v'' < v'$.
Also, $v''.init$ exists in the base log of $v'$ according to \autoref{claim:initview_first}.
Given \autoref{claim:log_matching}, the base log of $v'$ and the base log of $v''$ are at least identical up to $v''.init$.
Given that $v'$ is the smallest view that violates the claim, the base log of $v''$ includes $c$.
So, the base log of $v'$ also includes.
This leads to a contradiction given the definition of the base log.
\end{proof}

\begin{claim}\label{claim:initiator_completeness}
\emph{Initiator Completeness Property:}  If $c$ is committed in $v$, then $c$ is included in the log of the initiator of any view $v'$ that satisfies $v' > v$.
\end{claim}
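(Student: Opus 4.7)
The plan is to reduce the general statement to the already-proved Claim~\ref{claim:initiator_completeness_lemma} by splitting on the relationship between $c.view$ and $v$. By Claim~\ref{claim:propose_then_add} we must have $c.view \leq v$, so there are exactly two cases: $c.view = v$ and $c.view < v$. The first case is immediate: Claim~\ref{claim:initiator_completeness_lemma} gives the conclusion directly.

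For the harder case $c.view < v$, I would argue that whenever such a $c$ is committed in view $v$, the view-init command $v.init$ must itself have been committed in $v$, and its slot in the committing log lies strictly after $c$'s slot. This uses the protocol rule that a replica does not commit slots via sorting \var{acked} before $v.init$ is committed, together with Claim~\ref{claim:log_view_consecutive} and Claim~\ref{claim:initview_first}, which force $v.init$ to occupy the first slot of the log whose view tag is $v$. Since $v.init.view = v$, I can then apply Claim~\ref{claim:initiator_completeness_lemma} to $v.init$ to conclude that $v.init$ is included in the log of every $v'.initiator$ with $v' > v$, at the fixed slot $\var{view-base}$. Finally I invoke the Log Matching Property (Claim~\ref{claim:log_matching}) on $v.init$: the committing log and the initiator's log agree on an identical appended command at the same slot and view, so they coincide up to that slot; in particular, $c$, which sits before $v.init$ in the committing log, also appears in the initiator's log at the same slot.

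The main obstacle I anticipate is justifying that $v.init$ is an \emph{appended} command (as required by Claim~\ref{claim:log_matching}) in both logs involved. For the initiator of $v'$, this follows because $v.init$ is appended in its base log whenever Claim~\ref{claim:initiator_completeness_lemma} places it there, and the initiator never removes current-or-older committed commands. For the committing log, I would show that commitment of $c$ with $c.view < v$ forces the commit of $v.init$ and the absence of any gap between $c$'s slot and $\var{view-base}$: the catch-up phase of view change sends \msg{propose-nack} for every slot between $\var{last-ack}$ and $\var{view-base}$ and only advances $\var{last-ack}$ through proposals whose attached views match the initiator's log, so by the time $v.init$ is committed the log is contiguous up to and including $\var{view-base}$. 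A secondary point worth stating explicitly is that $c.view > v$ is impossible, which follows from Claim~\ref{claim:propose_then_add} applied at the moment of commit.
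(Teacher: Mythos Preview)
Your proposal is correct and follows essentially the same route as the paper: split on $c.view = v$ versus $c.view < v$ (ruling out $c.view > v$ via Claim~\ref{claim:propose_then_add}), invoke Claim~\ref{claim:initiator_completeness_lemma} directly in the first case, and in the second case observe that committing $c$ in $v$ forces $v.init$ to be committed in $v$, apply Claim~\ref{claim:initiator_completeness_lemma} to $v.init$, and finish with the Log Matching Property (Claim~\ref{claim:log_matching}). Your write-up is simply more explicit than the paper's about why $v.init$'s slot lies after $c$'s slot and why the appended-command hypothesis of Claim~\ref{claim:log_matching} is met; the paper states these points more tersely but relies on the same facts.
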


\begin{proof}
\autoref{claim:initiator_completeness_lemma} proves the case when $c.view = v$.
For $c.view \neq v$, according to the protocol, the commit happens when $v.init$ is committed in $v$.
Also, we know that $c.view < v$ from \autoref{claim:propose_then_add}.
So, for $v.init$, we know that it exists in the log of the initiator of any view bigger than $v$, which includes $v'$. 
According to \autoref{claim:log_matching}, we know that $c$ is also included in the log of the initiator of $v'$.
\end{proof}

\begin{claim}\label{claim:consistency_induction_block}
If command $c$ is committed at slot $s$ in view $v$, for any $v'$ that satisfies $v' \ge v$, any command $c'$ that is committed at $s$ in $v'$ is $c$.
\end{claim}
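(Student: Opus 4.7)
The plan is to prove the claim by strong induction on $v'$, handling the base case $v' = v$ by a direct case analysis and the inductive step by invoking \autoref{claim:initiator_completeness}.

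For the base case, I would argue that if both $c$ and $c'$ are committed at slot $s$ in view $v$, then they must share the same view. The reason: a command committed in view $v$ either has view equal to $v$ (it was acknowledged directly within $v$) or view strictly less than $v$ (it was committed as a side effect of committing $v.init$, which sweeps all proposals up to the view-base slot of $v$). The first case corresponds to slots at or beyond the view-base of $v$, since proposals in $v$ sit at or above view-base, while the second case corresponds to slots strictly below view-base. These ranges are disjoint as functions of $s$, so $c.view = c'.view$ must hold, and then \autoref{claim:slot_view_unique_append} would yield $c = c'$.

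For the inductive step, I would assume the claim for all $v \le v'' < v'$ and take $c'$ to be committed at $s$ in $v'$. By \autoref{claim:initiator_completeness}, $c$ appears at slot $s$ in the log of the initiator of $v'$. The plan is then to re-run the base-case argument inside view $v'$ to conclude that any command committed at $s$ in $v'$ must match the command at $s$ in that initiator's log, which gives $c' = c$.

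The hard part will be the slot-range separation used in the base case, which relies on nailing down the precise semantics of view-base: that $v.init$ sits at exactly the view-base slot, that any non-trivial proposal in view $v$ lives at a strictly larger slot, and that the transitive commit triggered by $v.init$ reaches only slots at or below view-base. A secondary subtlety is that the inductive-step appeal to the base-case reasoning must accommodate the possibility $c.view < v$, in which case $c$ sits in the committed prefix that the initiator of $v'$ inherits verbatim; the same mutual-exclusion reasoning should still apply, but care is needed to argue the initiator does not subsequently overwrite $c$ at $s$ during view $v'$.
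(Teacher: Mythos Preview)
Your overall structure matches the paper's: split on $v' = v$ versus $v' > v$, and in the latter case invoke \autoref{claim:initiator_completeness} to place $c$ in the initiator's base log of $v'$, then argue that any commit at $s$ in $v'$ must agree with that log. The paper does this as a direct two-case argument rather than an induction; your inductive hypothesis for $v \le v'' < v'$ is never actually invoked, since initiator completeness is already proved outright, so the induction scaffolding can simply be dropped.

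There is one gap in your base case. Your slot-range separation correctly shows that, for a fixed $s$, either both $c$ and $c'$ have view exactly $v$ (when $s$ is at or beyond view-base) or both have view strictly below $v$ (when $s$ is below view-base). In the first sub-case $c.view = c'.view = v$ and \autoref{claim:slot_view_unique_append} applies. But in the second sub-case, ``both strictly less than $v$'' is not the same as $c.view = c'.view$, so \autoref{claim:slot_view_unique_append} does not apply as stated. What actually closes this sub-case is the Log Matching Property (\autoref{claim:log_matching}): each committing log contains $v.init$ at the view-base slot with view $v$, hence all such logs agree on every slot up to view-base, in particular at $s$, giving $c = c'$ directly. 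The same patch is what makes your ``re-run the base-case argument inside $v'$'' go through in the inductive step, and it also dispatches your worry about the initiator overwriting slot $s$: since $s$ lies strictly below the view-base of $v'$, no proposal of view $v'$ is ever placed there.
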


\begin{proof}
If $v = v'$, according to \autoref{claim:slot_view_unique_append},  $c' = c$.
Otherwise, $c'$ is committed when $v'.init$ is committed, and $c'$ is included in the base log of $v'$.
According to \autoref{claim:initiator_completeness} and \autoref{claim:slot_view_unique_append}, $c' = c$.
\end{proof}

Using induction, \autoref{claim:consistency_induction_block} leads to \autoref{theorem:safety}.

\end{document}
